\documentclass[
    pra,superscriptaddress, twocolumn, amsmath,amssymb, longbibliography
]{revtex4-2}
\usepackage{graphicx, color, xcolor}
\usepackage{dcolumn}
\usepackage{comment}
\usepackage{qcircuit} 
\usepackage{caption, subcaption}
\usepackage{braket}   
\usepackage{lmodern}
\usepackage{xr}
\usepackage{tikz}
\usepackage{relsize} 
\usepackage{amsmath, amsfonts, amssymb}
\usepackage{algorithm, algorithmic}
\usepackage{verbatim, multirow}
\usepackage{hyperref,cleveref}
\usepackage{float}
\usepackage[utf8]{inputenc}
\usepackage[T1]{fontenc}
\usepackage{booktabs}
\usepackage{soul}
\usepackage{mathtools}
\usepackage{setspace}
\usepackage{MnSymbol}
\usepackage{appendix,physics,wasysym,xparse}
\usepackage{bm,bbm,euscript,braket}

\newcommand{\cC}{\mathcal{C}}

\newcommand{\cE}{\mathcal{E}}

\newcommand{\cG}{\mathcal{G}}
\newcommand{\cH}{\mathcal{H}}

\newcommand{\cL}{\mathcal{L}}

\newcommand{\cO}{\mathcal{O}}

\newcommand{\cR}{\mathcal{R}}
\newcommand{\cS}{\mathcal{S}}

\hypersetup{colorlinks,linkcolor={red},citecolor={blue},urlcolor={blue}}

\newtheorem{Theorem}{Theorem}
\newtheorem{lemma}[Theorem]{Lemma}

\newtheorem{corollary}[Theorem]{Corollary}

\newenvironment{proof}{{\bf Proof:}}{\hfill$\square$}
\begin{document}
\preprint{APS/123-QED}

\title{Quantum Codes for Generalized Amplitude-damping Noise}

\author{Sourav Dutta}
\thanks{Corresponding author: sourav@physics.iitm.ac.in}
\affiliation{Department of Physics, IIT Madras, Chennai, India - 600036}
\affiliation{Center for Quantum Information, Communication and Computing, IIT Madras, Chennai, India - 600036}

\author{Anubhab Rudra}%
\affiliation{Department of Physics, IIT Madras, Chennai, India - 600036}
\affiliation{Center for Quantum Information, Communication and Computing, IIT Madras, Chennai, India - 600036}

\author{Manav Seksaria}
\affiliation{Department of Electrical Engineering, IIT Madras, Chennai, India - 600036}
\affiliation{Center for Quantum Information, Communication and Computing, IIT Madras, Chennai, India - 600036}

\author{Anil Prabhakar}
\affiliation{Department of Electrical Engineering, IIT Madras, Chennai, India - 600036}
\affiliation{Center for Quantum Information, Communication and Computing, IIT Madras, Chennai, India - 600036}

\author{Prabha Mandayam}
\affiliation{Department of Physics, IIT Madras, Chennai, India - 600036}
\affiliation{Center for Quantum Information, Communication and Computing, IIT Madras, Chennai, India - 600036}
\begin{abstract}

Quantum error correcting (QEC) plays a crucial role in protecting quantum information against decoherence and enabling scalable, reliable quantum computing. One of the most realistic and ubiquitous sources of noise affecting quantum hardware today
is generalized amplitude-damping (GAD) noise.
Conventional, deterministic QEC codes struggle to correct for GAD noise because of their inherent structure, leading to fidelity losses that scale linearly with the damping strength.
In this work, we introduce the framework of probabilistic approximate quantum error correction (PAQEC), that combines the flexibility of approximate QEC with the potential of post-selected recovery, enabling high-fidelity, resource-efficient error correction. We construct a five-qubit permutation-invariant code that, under probabilistic recovery, achieves a fidelity loss quadratic in the damping strength, thus outperforming existing QEC codes.
Formulating PAQEC as an optimization problem, we present a numerical technique based on Charnes--Cooper and semidefinite programming to identify the optimal recovery map for any PAQEC code.
Our results establish PAQEC as a powerful tool for developing resource-efficient, high-fidelity quantum codes tailored to realistic noise, with promising implications for near-term quantum devices and future fault-tolerant architectures.

\end{abstract}

\maketitle

\section{introduction}
Quantum computing~\cite{Feynman1982, Deutsch1985, Nielsen_Chuang_2010} promises to transform information processing by harnessing the quantum phenomena of superposition and entanglement to tackle problems beyond the reach of classical computers~\cite{Shor, Grover1996}, with impactful applications in cryptography~\cite{QC1_Yang2023, QC2_Akter}, materials science~\cite{MS1_Lordi2021}, drug discovery~\cite{Drug1_Flther2025}, and optimization~\cite{Opti1_Volpe2025, Opti2_Liu2024}.
However, building a large-scale quantum computer remains a formidable challenge owing to decoherence arising from continuous interaction with the environment, imperfect control operations, and measurement uncertainty.
Noise can manifest in various forms, such as bit-flips, phase-flips, depolarizing errors, or complex correlated errors, depending on the physical implementation and environment~\cite{Lidar_QEC_2013}. Without effective correction or mitigation, noise accumulates rapidly, disrupting quantum computations and rendering them unreliable.

Quantum error correction (QEC)~\cite{Shor1995,gottesman_thesis,Calderbank1996, Knill1998,Barbara_RevModPhys, QEC_zoo} 
provides a framework for protecting quantum information by encoding logical qubits into multiple physical qubits, enabling detection and correction of errors without directly measuring or disturbing the quantum state. General-purpose codes, such as the Shor code~\cite{Shor1995}, Steane code~\cite{Steane_code}, CSS codes~\cite{CSS_1, CSS_2} and surface codes~\cite{Kitaev2003, surface_code_1, surface_code_2}, 
are designed to correct for arbitrary Pauli errors, making them widely applicable across different quantum hardware platforms.
However, they may not be optimal for any specific noise model and are rather resource intensive. Noise-adapted QEC~\cite{akshaya_review}, on the other hand, tailors the encoding and decoding schemes to the dominant noise characteristics of a particular system, achieving high fidelity while requiring fewer resources -- see, for example~\cite{Leung_1997, Chuang_1997, fletcher2008, mg_2018, Dutta_2024}. This adaptability is particularly valuable in near-term quantum devices where the noise structure is known and often has a biased structure~\cite{aliferis2009, webster2015reducing}.

The search for good noise-adapted QEC schemes naturally leads to a generalization of the mathematical framework of QEC, beyond the well known Knill-Laflamme (KL) conditions~\cite{KLCondition}. These are a precise set of algebraic conditions for when a quantum code can perfectly correct a set of errors, which can be hard to satisfy for arbitrary noise models. Approximate quantum error correction (AQEC)~\cite{Leung_1997, beny, prabha} and probabilistic quantum error correction (PQEC)~\cite{pQEC1, pQEC2, pQEC3, pQEC4, dutta3qubit} solve this problem by extending the standard framework of quantum error correction to accommodate small deviations from the KL conditions.

By relaxing the constraint of perfect correctability, AQEC enables more flexible and efficient QEC schemes that are better suited to near-term quantum devices~\cite{debjyoti}. PQEC allows for the error correction process to be successful with only a certain probability, albeit exactly. By post-selecting over the successful instances, this approach can simplify decoding strategies and reduce overhead, thereby improving performance in practice~\cite{joshi2026}. Combining these two approaches can yield an interesting QEC framework that has not yet been studied in the literature.

Here, we construct a framework that combines AQEC and PQEC to design a probabilistic approximate quantum error correction (PAQEC) scheme that corrects for generalized amplitude-damping (GAD) noise. GAD\cite{GAD1, GAD2, GAD3, GAD4} is a finite-temperature generalization of amplitude-damping (AD) noise, and is one of the most realistic noise models in many quantum hardware. Previous studies on correcting GAD noise \cite{Cafaro_2014} have shown that deterministic codes that succeed with certainty cannot eliminate first-order errors.

In this work, we demonstrate the existence of a five-qubit permutation-invariant code that yields a fidelity loss quadratic in the damping strength, by adopting a probabilistic, approximate framework for QEC. The PAQEC framework enables approximate recovery of quantum states with high fidelity, probabilistically. We formulate algebraic conditions for PAQEC, and show that the problem of finding the optimal PAQEC recovery can be solved as a semidefinite program. 
 Similarly, we show that our algebraic conditions can also be used to find numerically optimized QEC codes that achieve a very high entanglement fidelity for arbitrary noise channels.

The rest of the paper is structured as follows.
In Sec.~\ref{sec:preli}, we briefly survey preliminary ideas on AQEC, PQEC and GAD noise.
In Sec.~\ref{sec:Dutta_5}, we introduce our five-qubit permutation-invariant (PI) code that can probabilistically correct for GAD noise with a fidelity loss that grows quadratically in damping strength. 
In Sec.~\ref{sec:PAQEC}, we formally introduce the most general error correction conditions, the PAQEC conditions, and discuss the performance of the five-qubit PI code under different recoveries.
In Sec.~\ref{sec:Optimization}, we use numerical techniques to both find the optimal recovery for PAQEC codes using semidefinite programming and also find good quantum codes that can correct the GAD noise. Finally, we conclude in Sec.~\ref{sec:conclusion} with a summary and future directions.

\section{Preliminaries}\label{sec:preli}
\subsection{Perfect and Approximate Quantum Error Correction (QEC)}

Quantum systems are fragile due to unavoidable interactions with their environment. A quantum noise channel can be modeled as a completely positive trace-preserving (CPTP) map($\mathcal{E}$), whose action on a state $\rho$, can be described by a set of Kraus operators $\{ E_m \}$, as, $\mathcal{E}(\rho) = \sum_m{E_m \rho E_m^\dagger}$ with $\sum_m{E_m^\dagger E_m} = I$. The central idea of quantum error correction (QEC) is to encode quantum information into a subspace $\mathcal{C}$ of the physical Hilbert space, spanned by orthonormal quantum states $\{ \ket{i_L} \}$, referred to as codewords. The \emph{codespace} $\mathcal{C}$ together with the corresponding encoding map defines a quantum error correcting code.

A quantum code with codewords $\{ \ket{i_L} \}$ can correct for a set of Kraus operators $\{ E_m \}_{m=1}^M$ \emph{perfectly} if and only if~\cite{KLCondition},
\begin{gather}\label{eq:KL}
    \bra{i_L}E_m^\dagger E_n \ket{j_L} = \lambda_{mn} \delta_{ij},
\end{gather}
where $\lambda_{mn}$ are complex scalars. 
These algebraic conditions, called the Knill-Laflamme (KL) conditions, guarantee that the errors $\{E_{i}\}$ map the codespace to mutually orthogonal subspaces, enabling perfect detection and reversal of the errors. The recovery operation in this case is simply a projective measurement -- the \emph{syndrome} measurement -- followed by a unitary operation~\cite{Nielsen_Chuang_2010}.

Approximate QEC relaxes the strictness of these conditions, allowing for shorter codes that can correct up to a certain order in noise strength. Although several variants of AQEC conditions exist in the literature, here we use the form from \cite{beny}, 
\begin{gather}\label{eq:KL_approx}
    \bra{i_L}E_m^\dagger E_n \ket{j_L} = \lambda_{mn} \delta_{ij} + \bra{i_L} B_{mn}\ket{j_L}.
\end{gather}
Here, $\{B_{mn}\}$ are arbitrary operators on the codespace, which characterize the deviation from the perfect QEC conditions in Eq.~\eqref{eq:KL}. If the perturbation operators $\{B_{mn}\}$ satisfy
\begin{align}
    \bra{i_L} B_{mn}\ket{j_L} \sim O(\epsilon^{t+1}),
\end{align}
uniformly for all $i,j,m,n$, then the KL conditions are satisfied up to order $O(\epsilon^{t+1})$. Then, it was shown in~\cite{beny} that there exists a recovery map, a CPTP map $\mathcal{R}$, that corrects the corresponding errors up to order $O(\epsilon^{t})$.

\subsection{Probabilistic QEC}

Probabilistic quantum error correction (PQEC) provides a framework for correcting noise using a recovery scheme that is not deterministic; rather, the recovery step involves post-selection. Here we briefly review the algebraic framework of PQEC developed in~\cite{dutta3qubit}. Given a quantum code $\cC = \{\ket{i_L}\}$, suppose the noise operators of the channel $\cE$ are grouped into $\mu$ sets of the form $\{E^{(a)}_m\}$, where $a \in [1, \mu],\ m \in [1, \eta_a]$, such that each set of $\eta_a$ error operators satisfies,
\begin{align}\label{eq:QEC_prob_cond}
    \sum_{v=1}^{\eta_a} \bra{i_L} E^{(a)\dagger}_v E^{(b)}_m\ket{j_L} = \chi^a_i  \delta_{ab} \delta_{ij}, \qquad \chi_i^a \neq 0.
\end{align}

Then, there exists a trace non-increasing recovery channel $\cR = \{R_a P_{a}\}$ that can perfectly correct the errors introduced by the noise channel $\cE$, where $\{P_a\}$ are a set of orthogonal projectors onto the noisy subspaces corresponding to the error set ${E^{(a)}_m}$ and $R_a$ is defined as~\cite{dutta3qubit}
\begin{align}\label{eq:reco_PQEC_perfect}
    R_a = \sum_{i,m}\frac{\lambda_a}{\chi^a_i} \ket{i_L} \bra{i_L} E^{(a)\dagger}_m.
\end{align}
Here, $\lambda_a$ is a normalizing factor chosen such that the largest eigenvalue of each $R_a^{\dagger} R_a$ is one, i.e., $R_a^{\dagger} R_a \leq I$.
This ensures that the channel $\cR$ remains trace non-increasing, since,
\begin{align}
    \sum_a \left(R_a P_a\right)^{\dagger} \left(R_a P_a\right) &= \sum_a P_a R_a^{\dagger} R_a P_a  \leq \sum_a P_a \leq I.
\end{align}

Such a recovery channel can be implemented via a unitary operation on an extended system, where successful implementation of the recovery is conditioned on the state of the ancilla system.

While the probabilistic nature of the correction introduces some post-selection induced sampling overhead, it can be advantageous in scenarios where exact recovery is more critical than deterministic operation, such as in near-term quantum devices or specialized quantum communication. Recently, the PQEC-based scheme to correct for amplitude-damping noise was successfully demonstrated on publicly available quantum hardware, with break-even performance~\cite{joshi2026}.

\subsection{Generalized amplitude-damping noise}
Amplitude-damping noise is one of the dominant noise models in present-day quantum hardware, that arises from a Jaynes–Cummings interaction~\cite{Shore1993, Jaynes1963} between the quantum system and its environment, which is assumed to be at the absolute zero temperature. The state $\ket{0}$ is the unique fixed point of the AD channel, whereas $\ket{1}$ decays to $\ket{0}$ with probability $\gamma$.
The Kraus operators for the AD noise for qubit systems are thus written as,
\begin{align}
    \begin{split}
        & A_0 = \begin{pmatrix}
        1 & 0 \\ 0 & \sqrt{1-\gamma}
    \end{pmatrix}, ~~ A_1 = \begin{pmatrix}
        0 & \sqrt{\gamma} \\ 0 & 0 \end{pmatrix},
    \end{split}
    \label{eq:ad_Kraus}
\end{align}
where $\gamma$ is the damping strength of the noise-channel.

Generalized amplitude-damping noise~\cite{Cafaro_2014} is the finite-temperature generalization of AD noise~\cite{Chuang_1997, mg_2018, Dutta_2024}.
This noise model assumes that the environment is not at absolute zero but at some finite temperature, leading to a finite probability of excitation from $\ket{1}$ to $\ket{0}$along with the damping from $\ket{1}$ to $\ket{0}$.

For a two-level system, the Kraus operators of a GAD channel are given by
\begin{align}
    \begin{split}
        & A_0 = \sqrt{1 - p} \begin{pmatrix}
        1 & 0 \\ 0 & \sqrt{1-\gamma}
    \end{pmatrix}, ~~ A_1 = \sqrt{1 - p} \begin{pmatrix}
        0 & \sqrt{\gamma} \\ 0 & 0 \end{pmatrix}, \\
    & B_0 = \sqrt{p} \begin{pmatrix}
        \sqrt{1-\gamma} & 0 \\ 0 & 1
    \end{pmatrix}, ~~~~ B_1 = \sqrt{p} \begin{pmatrix}
        0 & 0 \\ \sqrt{\gamma} & 0 \end{pmatrix}.
    \end{split}
    \label{eq:gad_Kraus}
\end{align}

\noindent The error operator $A_{1}$ represents a decay process that maps $\ket{1}$ to $\ket{0}$ and annihilates the state $\ket{0}$. The operator $B_1$ represents the excitation process; it is associated with the absorption of a photon from the environment into the system, unlike $A_1$, which is associated with the emission of a photon from the system to the environment.

The parameter $p$ which representation the relative probability of occurrence of the excitation process over the relaxation, depends on the temperature $T_0$ of the environment through the Boltzmann distribution $$p = \frac{e^{-\frac{\hbar \omega}{k_B T_0}}}{1 + e^{-\frac{\hbar \omega}{k_B T_0}}}.$$ 
Here, $\hbar$, $\omega$, and $k_B$ denote the reduced Planck constant, the transition angular frequency of the qubit, and the Boltzmann constant, respectively, while $T_0$ is the temperature of the environment. At absolute zero ($T_0 = 0$ K), the parameter $p$ equals $0$, implying that the excitation process is completely suppressed. At high temperatures ($T_0 \rightarrow \infty$ ), the parameter $p$ converges to the value $\frac{1}{2}$~\cite{Cafaro_2014}, indicating that the excitation and decay processes are now equally likely.
\subsection{Benchmarking {PQEC: Fidelity and Success Probability}}
One of the standard measures to quantify the performance of a QEC scheme is the entanglement fidelity. For a code $\mathcal{C}$ with codewords $\{ \ket{i_L} \}$, the entanglement fidelity after the action of noise channel $\mathcal{E}$ and recovery $\mathcal{R}$ is defined as~\cite{Nielsen_Chuang_2010},
\begin{equation}
F_{\text{ent}} = \bra{\Psi} (\cR \circ \cE) \otimes I(\ket{\Psi}\bra{\Psi})\ket{\Psi}, \label{eq:ent_fid}
\end{equation}
where $\ket{\Psi}$ is a purification of the maximally mixed state on the codespace $\rho_{L} = \frac{1}{2^k} \sum_{i=0}^{2^k -1}\ket{i_L}\bra{i_L}$, given by,
\begin{equation}\label{eq:pure}
\ket{\Psi} = \frac{1}{\sqrt{2^k}} \sum_{i=0}^{2^k - 1} \ket{i_L} \otimes \ket{i}.
\end{equation}
Unlike state fidelity, which quantifies the performance of the QEC protocol for individual states, the entanglement fidelity serves as a lower bound of the average state fidelity~\cite{Schumacher_Nielsen_Entanglement_fidelity}.

For our post-selected QEC setting, where the recovery is probabilistic, the entanglement fidelity takes the form
\begin{equation}\label{eq:ent_fid_prob_case}
    F_{\text{ent}} = \frac{\bra{\Psi} (\cR \circ \cE) \otimes I(\ket{\Psi}\bra{\Psi})\ket{\Psi}}{\text{Tr}[ (\cR \circ \cE) \otimes I(\ket{\Psi}\bra{\Psi})]},
\end{equation}
where we discard the undesired post-selection outcomes and calculate the fidelity only for the successful instances. {We further prove in Appendix \ref{app:A} that the Haar-averaged probability of successful recovery is given by
\begin{equation}
\Bar{p}_{success} = \Tr[((\cR \circ \cE)\otimes I)(\dyad{\Psi})], \label{eq:avg_prob}
\end{equation}
where $|\Psi\rangle$ is the maximally entangled logical state defined in Eq.~\eqref{eq:pure}.}

\section{A $5$-qubit code for generalized amplitude-damping noise}\label{sec:Dutta_5}
Correcting GAD noise to first order in the damping strength poses a significant challenge for conventional codes. This difficulty arises from the structure of the GAD noise channel -- while the damping and excitation errors, denoted by $A_1$ and $B_1$, can be detected with the appropriate choice of code, it is much harder to identify the nature of the no-damping errors $A_0$ and $B_0$.

Indeed, conventional QEC schemes do not correct for first-order GAD errors completely, and the fidelity loss contains a term of the form $p\gamma$ in it~\cite{Cafaro_2014}. Typically, the value of $p$ is assumed to be much less than $\gamma$, and the term $p\gamma$ is considered to be equivalent to $\cO(\gamma^2)$. However, different physical gates can change the structure of this noise. For example, under the action of the Pauli $X$ gate, $XA_0 = B_0 X$ and $XA_1 = B_1X$. In other words, the $X$ gate does not change the inherent structure of the Kraus operators of the GAD noise but changes their probabilities of occurrence; that is, the errors $\{A_m\}$ in Eq.~\eqref{eq:gad_Kraus} now occur with probability $p$ and the errors $\{B_m\}$ occur with probability $1-p$. Hence, the fidelity loss term $p\gamma$ can no longer be associated with an $\cO(\gamma^2)$ term.

The existence of a quantum code that yields a fidelity loss that is genuinely of order $\cO(\gamma^2)$ under GAD noise remains an open problem. We now show that such a code is indeed possible in the probabilistic approximate QEC (PAQEC) framework.

We first group the GAD channel errors that are first order in $p$ and $\gamma$, into three sets. The first set, denoted as $\cG^{(0)}$, consists of all the tensor-product combinations of the no-damping errors $A_0$ and $B_0$.
The second (third) set of errors, denoted as $\cG^{(\downarrow)}$ ($\cG^{(\uparrow)}$), consists of a single damping (excitation) error $A_1$ ($B_1$) on one of the physical qubits, while the rest of the qubits are subject to combinations of no-damping errors $A_0$ and $B_0$.
Our aim is to approximately correct these three sets of errors such that the leading order fidelity-loss is $\cO(\gamma^2)$.

\begin{figure}
    \centering
    \includegraphics[width=0.99\linewidth]{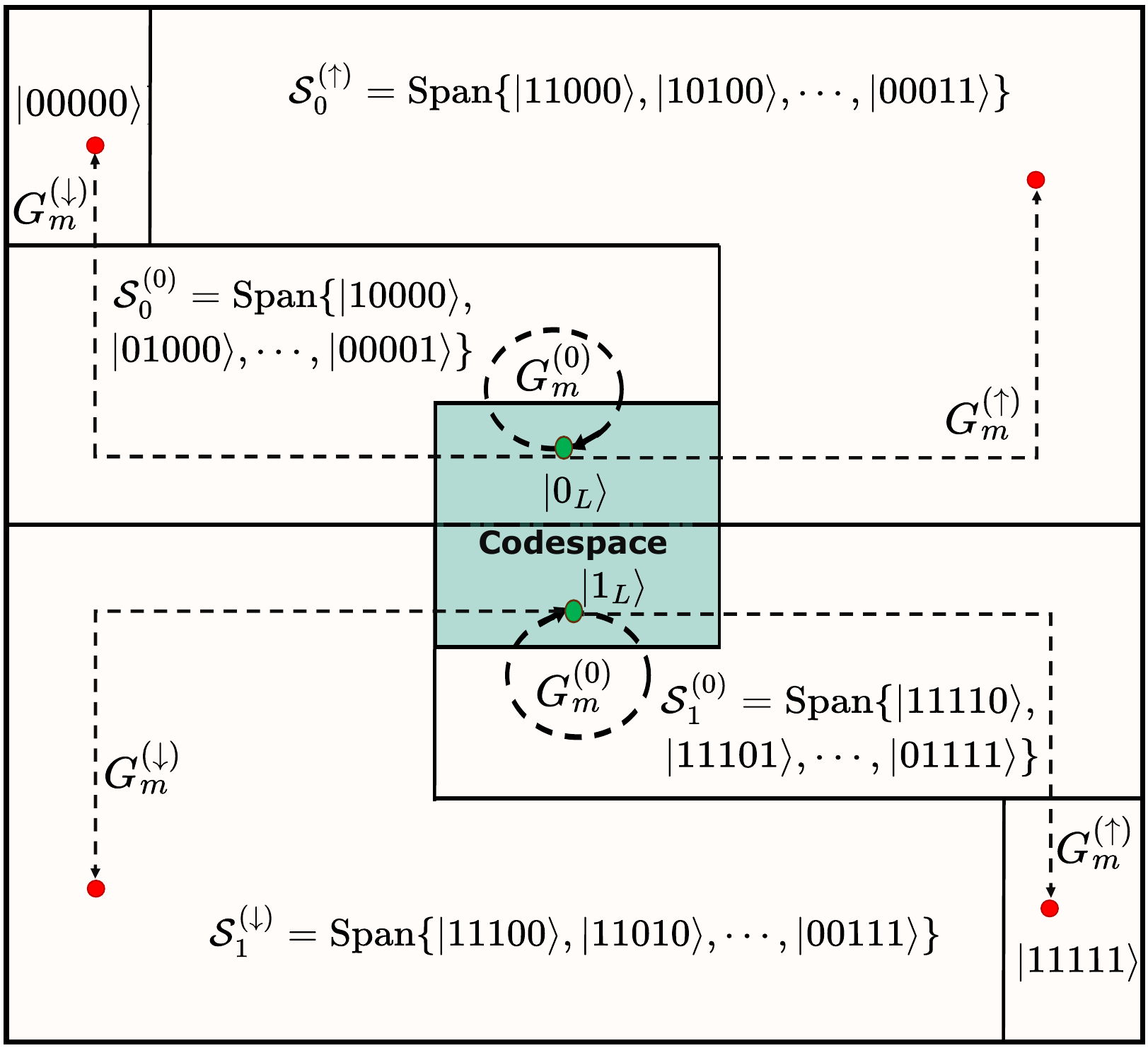}
    \caption{\textbf{Action of generalized amplitude-damping on the $[5,1]$ PI code.}
A first-order GAD error $G_m^{(a)}$ maps codeword $\ket{i_L}$ into a state in the subspace $\cS_i^{(a)}$. Subspaces $\{\cS_i^{(a)}\}_{i,a}$ are mutually orthogonal for different $i$ and $a$. Note that the codespace overlaps with $\cS_0^{(0)}$ and $\cS_1^{(0)}$.}
    \label{fig:Dutta5_HS}
\end{figure}

\subsection{The $[5,1]$ Permutation-Invariant Code}\label{subsec:encoding}
We encode the physical qubits into permutation-invariant (PI) states with fixed excitation numbers, also known as Dicke States~\cite{dicke}. 
We note that our quantum code for GAD is distinct from the known examples of QEC codes constructed based on PI states~\cite{Ouyang2014, Ouyang2026, Chandra2026, dutta3qubit}.
We encode the logical ``$0$'' state in a five-qubit Dicke state with excitation number $1$ and encode the logical ``$1$'' in the five-qubit Dicke state with excitation number $4$, as shown below.
\begin{align} \label{eq:code_dutta5}
    \begin{split}
        \ket{0_L} = \frac{1}{\sqrt{5}} ( \ket{10000} + \ket{01000} + \ket{00100} + \ket{00010} + \ket{00001} ), \\
        \ket{1_L} = \frac{1}{\sqrt{5}} ( \ket{11110} + \ket{11101} + \ket{11011} + \ket{10111} + \ket{01111} ).
    \end{split}
\end{align}

For a five-qubit code, a total of $2^5 = 32$ error operators are present in $\cG^{(0)} $ set, which is represented as $\{G^{(0)}_m\}_{m=1}^{32}$.
The second set $\cG^{(\downarrow)}$ consists of errors with one damping error $A_1$ on one of the five qubits and a combination of $A_0$ and $B_0$ acting on the rest. Finally, the third set $\cG^{(\uparrow)}$ consists of a single excitation error $B_1$ on one qubit and combinations of $A_0$ and $B_0$ acting on the rest. The number of errors in each of the sets $\cG^{(\downarrow)}$ and $\cG^{(\uparrow)}$ is $5 \times 2^{5-1} = 80$ and they are indexed as $G^{(\downarrow)}_m$ and $G^{(\uparrow)}_m$ respectively.

The action of first-order GAD errors on the codewords in Eq.~\eqref{eq:code_dutta5} is schematically illustrated in Fig. \ref{fig:Dutta5_HS}. The no-damping errors $\mathcal{G}^{(0)}$ merely distort the codewords by a $\gamma$-dependent scalar factor.
For example, the error operator $A_{00000} = A_{0}\otimes A_{0}\otimes A_{0}\otimes A_{0}\otimes A_{0}$ acts as,
\begin{align}
    \begin{split}
        &A_{00000}|0_L\rangle = \sqrt{1-\gamma} |0_L\rangle, ~~ A_{00000}|1_L\rangle = (1-\gamma)^2 |1_L\rangle
    \end{split}
\end{align}
However, errors that contain both $A_0$ and $B_0$ will distort the codewords, as exemplified by the action of the error operator $A_{000}B_{00}$ $= A_{0}\otimes A_{0}\otimes A_{0}\otimes B_{0}\otimes B_{0}$ below.
\begin{align}
    \begin{split}
        A_{000}B_{00}|0_L\rangle &= \frac{1}{\sqrt{5}}\left( (1-\gamma)^{\frac{3}{2}} (|10000\rangle + |01000\rangle+ |00100\rangle) \right. \\
        & \left. + (1-\gamma)^{\frac{1}{2}} (|00010\rangle + |00001\rangle) \right) \\
        A_{000}B_{00}|1_L\rangle &= \frac{1}{\sqrt{5}}\left( (1-\gamma) (|01111\rangle + |10111\rangle+ |11011\rangle) \right. \\
        & \left. + (1-\gamma)^2 (|11101\rangle + |11110\rangle) \right)
    \end{split}
\end{align}

\noindent In contrast, the errors in $\mathcal{G}^{(\downarrow)}$ map the codewords $\ket{0_L}$ and $\ket{1_L}$ to states with excitation numbers $0$ (that is, $\ket{00000}$) and $3$ (for example, $\ket{11100}$ or $\ket{10101}$), respectively. The errors in $\cG^{(\uparrow)}$ on the other hand, map the codewords $\ket{0_L}$ and $\ket{1_L}$ to states with excitation number $2$ (for example, $\ket{11000}$ or $\ket{00110}$) and $5$ (that is, $\ket{11111}$).

To summarize, a first-order GAD error $G_m^{(a)}$ takes codeword $\ket{i_L}$ into a state in the subspace $\cS_i^{(a)}$ such that the \emph{noisy subspaces} $\{\cS_i^{(a)}\}_{i,a}$ are mutually orthogonal for different $i$ and $a$ (see Fig.~\ref{fig:Dutta5_HS}). Note that the codespace itself overlaps with $\cS_0^{(0)}$ and $\cS_1^{(0)}$. A direct consequence of these properties is that we can perform a projective measurement to unambiguously differentiate between the different groups of single-qubit errors.

The projectors $\{P_{0}, P_{\downarrow}, P_{\uparrow}\}$ onto the error subspaces associated with the error sets $\{\mathcal{G}^{(0)}, \mathcal{G}^{(\downarrow)}, \mathcal{G}^{(\uparrow)}\}$ can be expressed as sums of computational-basis projectors
$\dyad{x}$ labeled by the excitation number of the state, or equivalently, the Hamming weight $\mathrm{wt}(x)$,
where $x \in \{0,1\}^{\times 5}$.
\begin{align}
P_0 &= \sum_{\mathrm{wt}(x)=1,4} \dyad{x}, \nonumber\\
P_{\downarrow} &= \sum_{\mathrm{wt}(x)=0,3} \dyad{x}, \nonumber\\
P_{\uparrow} &= \sum_{\mathrm{wt}(x)=2,5} \dyad{x}.
\end{align}
Next, we present a compact quantum circuit that performs the syndrome measurement corresponding to these projectors and unambiguously differentiates between the no-damping, single-damping, and single-excitation errors.

\subsection{Circuit for syndrome measurement}

\begin{figure*}
    \centering
    \includegraphics[width=0.95\linewidth]{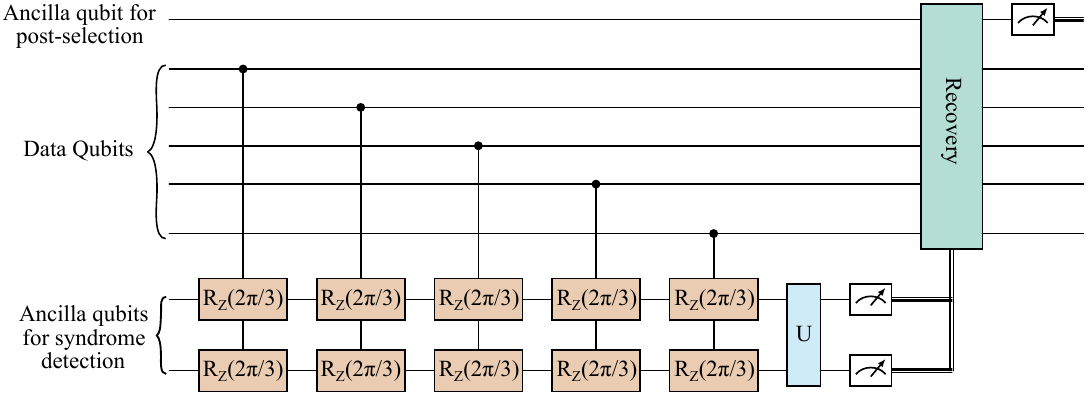}
    \caption{\textbf{Syndrome-based recovery protocol for the $[5,1]$ PI code. } Circuit for syndrome-based recovery of the five-qubit permutation-invariant code under GAD noise.}
    \label{fig:syn}
\end{figure*}

The $[5,1]$ PI code is not a stabilizer code, as it does not admit four independent stabilizer generators.
However, there exist two \emph{symmetry} operators that act non-trivially on the logical states, either of which can be used to determine the type of error (no damping, single excitation, or single damping) that occurred.
Consider the operators,
\begin{equation}\label{eq:stabs}
    S_1 = R_Z\left(\frac{2\pi}{3}\right)^{\otimes 5} \qquad S_2 = R_Z\left(-\frac{\pi}{3}\right)^{\otimes 5},
\end{equation}
where $R_{Z}(\theta) = e^{i\frac{\theta}{2}Z}$, is a rotation by angle $\theta$ about the $Z$-axis of the Bloch sphere. These operators commute with the no-damping errors, as they are all diagonal matrices. For the single damping errors $A_m$ and single excitation errors $B_m$, we have
\begin{equation}
    A_m S_1 = \omega S_1 A_m, \qquad B_m S_1 = \omega^2 S_1 B_m,
\end{equation}
where $\omega$ is the third root of unity.
Given these commutation relations, it is impossible to detect the type of error using a single qubit ancilla; rather, it requires a three-dimensional quantum system, that is, a qutrit ancilla.

Instead of a qutrit, we use two ancilla qubits with the states $\ket{00}$, $\ket{01}$, and $\ket{11}$ representing the three levels of a qutrit. We first prepare the ancilla qubits in the state,
$$\ket{\psi} = \frac{1}{\sqrt{3}} (\ket{00} + \ket{01} + \ket{11}).$$ We then perform the following controlled operation,
$$ CR_{ZZ} = \dyad{0} \otimes I^{\otimes 2} + \dyad{1} \otimes R_Z\left(\frac{2\pi}{3}\right)^{\otimes 2}, $$
with the data qubits acting as the control and the target being the ancilla qubits, as shown in Fig. \ref{fig:syn}. If there are no-damping errors, the state of the ancilla qubits can be written as,
\begin{equation}
    \ket{\psi_{0}} = \frac{1}{\sqrt{3}} (\ket{00} + \omega \ket{01} + \omega^2 \ket{11}),
\end{equation}
whereas in the case of a single-damping error or a single-excitation error, the states of the ancilla qubits are respectively,
\begin{align}
    \ket{\psi_{\downarrow}} &= \frac{1}{\sqrt{3}} (\ket{00} + \ket{01} + \ket{11}), \\
    \ket{\psi_{\uparrow}} &= \frac{1}{\sqrt{3}} (\ket{00} + \omega^2 \ket{01} + \omega \ket{11}).
\end{align}
Note that the states $\ket{\psi_0}, \ket{\psi_{\downarrow}}$ and $\ket{\psi_{\uparrow}}$ are orthogonal to each other. We can distinguish between these three states by measuring the ancilla qubits in an appropriate basis. This completes the syndrome measurement, which reveals the type of error that occurred.

\subsection{Recovery Circuit}\label{subsec:recovery}

Once the type of error has been identified, we perform the appropriate recovery operation on the encoded qubits. Unlike the case of Pauli errors and stabilizer codes, the recovery operators for this scheme are non-unitary operators on the five-qubit space. Such a recovery operator can be implemented as a unitary on six qubits, followed by a measurement and post-selection on the additional ancilla qubit for successful implementation of the recovery~\cite{non_unitary_2005, dutta3qubit}.

Since the errors $G^{(a)}_m$, corresponding to different error sets $\mathcal{G}^{(a)}$, map the logical states to orthogonal error subspaces $S_i^{(a)}$, as shown in  Fig. \ref{fig:Dutta5_HS}, a natural recovery strategy is to construct a recovery operator that projects any state from $S^{(a)}_i$ to logical state $|i_L\rangle$. We need to prefix these operators with coefficients $g^a_i(\gamma,p)$ to balance the contribution of different error operators according to their likelihood under the noise model.

The recovery for the $[5,1]$ PI code can thus be implemented using operators of the form,
\begin{equation}\label{eq:reco}
    R_a = \sum_{i,m} g_{i}^{a}(\gamma, p) \dyad{i_L} G^{(a) \dagger}_m,
\end{equation}
where $a \in \{0, \uparrow, \downarrow\}$ and $g_i^a(\gamma,p)$ are functions of the noise parameters $\gamma$ and $p$ chosen in such a way that the recovery channel comprising the operators $\{R_{a}\}$ is trace non-increasing. This recovery procedure is similar in structure to that defined by the operators in  Eq.~\eqref{eq:reco_PQEC_perfect}, and maybe implemented via a unitary on an extended space, with a probability of success that depends on the noise strength ~\cite{dutta3qubit}. Additionally, since our syndrome cannot identify the exact error from the error set $\cG^{(a)}$, $R_a$ can only approximately correct for the errors in the group $\cG^{(a)}$. Our $[5,1]$ PI code is thus an example of a probabilistic and approximate QEC code for GAD noise.

We can calculate the entanglement fidelity of our $5$-qubit QEC scheme in the presence of GAD noise
using the expressions in Eq.~\eqref{eq:ent_fid}, as,
\begin{align}\label{eq:ent_fid_23}
    F_{\text{ent}} &= \frac{\sum_{a,m} \bra{\Psi}(R_a G^{(a)}_m \otimes I) \dyad{\Psi} (R_a G^{(a)}_m \otimes I)^{\dagger}\ket{\Psi}}{\text{Tr}\left[\sum_{a,m}(R_a G^{(a)}_m \otimes I) \dyad{\Psi} (R_a G^{(a)}_m \otimes I)^{\dagger}\right]}\nonumber\\
    &= \frac{1}{2} \frac{\sum_{am} \left(\sum_{i} g^a_i(\gamma,p) \sum_{\nu}\bra{i_L}G^{(a)\dagger}_{\nu} G^{(a)}_m\ket{i_L}\right)^2}{\sum_{ami} \left(g^a_i(\gamma, p) \sum_{j\nu} \bra{i_L} G^{(a)\dagger}_{\nu} G^{(a)}_{m}\ket{j_L}\right)^2}.
\end{align}
We still need to find a suitable form for $g_i^a(\gamma,p)$ which maximizes this fidelity, while keeping the recovery channel trace non-increasing.
In our case, we define $g_i^a(\gamma,p)$ to be the inverse of the average of $\sum_{\nu}\bra{i_L}G^{(a)\dagger}_{\nu} G^{(a)}_m\ket{i_L}$ over $m$, that is,
\begin{align}\label{eq:gai}
    g_i^a(\gamma,p) = \lambda_a \left(\frac{1}{|m|} \sum_{m,\nu}\bra{i_L}G^{(a)\dagger}_{\nu} G^{(a)}_m\ket{i_L}\right)^{-1},
\end{align}
with an additional {free parameter $\lambda_a$} to take care of the trace non-increasing nature of the channel. Such a choice leads us to an entanglement fidelity {that is truly second order in both $\gamma$ and $p$, for the $[5,1]$ PI code}.
\begin{align} \label{eq:ent_fid_avg}
    F_{\text{ent}} =& 1 - \left(\frac{27}{10}p -9 p^{\frac{3}{2}} + \frac{153}{10}p^2 - \frac{45}{2}p^{\frac{5}{2}} + \cO(p^3) \right) \gamma^2 + \cO(\gamma^3).
\end{align}
\begin{figure*}[t!]
    \centering
    \includegraphics[width=1\linewidth]{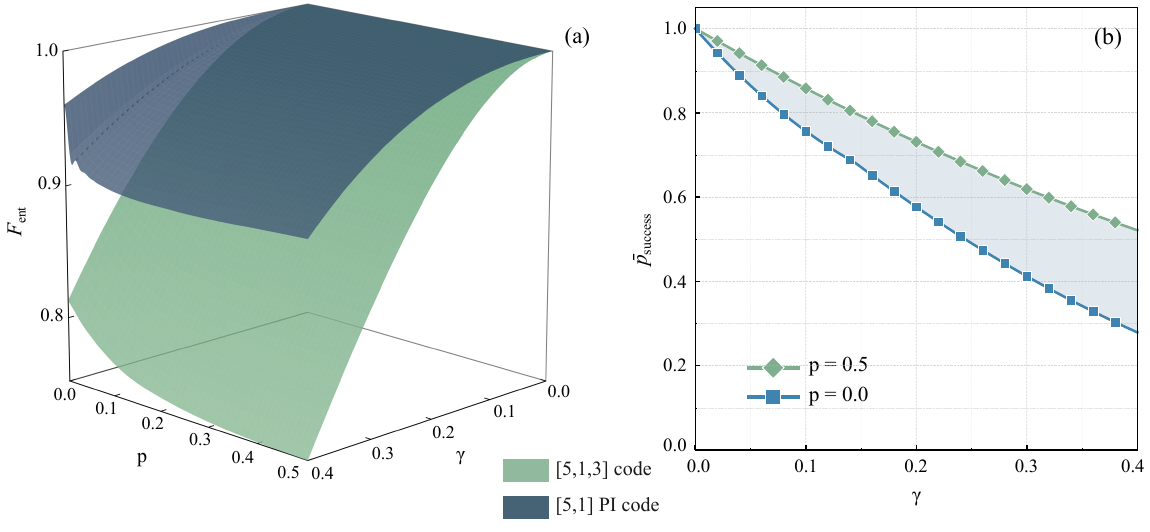}
    \caption{\textbf{Performance of $[5,1]$ PI code.}
    (a) Entanglement fidelity as a function of noise parameters $\gamma$ and $p$, for the $[5,1]$ PI code and the $[5,1,3]$ stabilizer code with a numerically optimized recovery.
    (b) {Average} probability of successful recovery as a function of $\gamma$, with {$0 \leq p \leq 0.5$}, for the $[5,1]$ PI code.}
    \label{fig:comp}
\end{figure*}
\noindent In fact, if we consider $p$ to be of the same order as $\gamma$, Eq. \eqref{eq:ent_fid_avg} suggests that the leading order of entanglement fidelity loss is effectively $\mathcal{O}(\gamma^3)$.

In Fig.~\ref{fig:comp} (a), we compare the performance of the $[5,1]$ PI code with that of the $[5,1,3]$ stabilizer code using a numerically optimized recovery map obtained via a semidefinite program (SDP)~\cite{Fletcher_sdp}. We see that our $[5,1]$ PI code outperforms the $[5,1,3]$ stabilizer code for all the values of $\gamma$ and $p$.

The probability of successfully implementing the recovery channel described by the operators in Eq.~\eqref{eq:reco}, is derived in Sec.~\ref{sec:analytical} below. Fig.~\ref{fig:comp} (b) shows how the probability of successfully implementing the recovery operation changes with the strength of the noise. This numerical plot is based on the analytical expression for the success probability obtained in Eq.~\eqref{eq:prob_succ}.

\subsection{Logical gates}\label{sec:logical_gates}

Going beyond the QEC scheme, we also obtain {a universal set of} logical gates for the $[5,1]$ PI code.
We observe that the code defined in Eq. \eqref{eq:code_dutta5} has a symmetry which naturally leads to transversal logical $\overline{X}$ and $\overline{R}_Z(\theta)$ gates, as given below.
\begin{align}
    &\overline{X} = X \otimes X \otimes X \otimes X \otimes X \\
    &\overline{R}_Z(\theta) = e^{-\frac{i\theta}{3}} \left(R_Z\left(\frac{\theta}{3}\right)\right)^{\otimes 5} \label{eq:logical}
\end{align}
The latter in turn leads to transversal $\overline{Z}$, $\overline{S}$, and most importantly a transversal non-Clifford $\overline{T}$ gate. For logical $\overline{Z}$, $\overline{S}$, $\overline{T}$, the value of $\theta$ in Eq.~\eqref{eq:logical} are $\pi$, $\frac{\pi}{2}$ and $\frac{\pi}{4}$ respectively.

To complete the universal logical gate set, we need $\overline{H}$ and $\overline{CNOT}$ or $\overline{CZ}$ gates. Although these gates do not have a simple transversal structure for our $[5,1]$ code, one can use gate teleportation~\cite{Knill_2005_gate_teleportation} to avoid entangling operations between the physical qubits of a given logical block. We refer to Appendix~\ref{appendix:trans_log_gates} for details of the logical Hadamard and CZ gate implementations.

\section{The PAQEC framework}\label{sec:PAQEC}
Having demonstrated the existence of a $5$-qubit code for GAD noise that can correct up to first order in both $\gamma$ and $p$, we now provide an algebraic framework that extends the standard QEC conditions to accommodate such codes. As explained in Sec.~\ref{sec:preli}, the KL conditions for perfect QEC, can be generalized in two ways -- adding a perturbation term leads to approximate QEC (AQEC), whereas, allowing for a trace non-increasing recovery maps leads to probabilistic QEC (PQEC). Here, we combine these two ideas to arrive at conditions for probabilistic approximate QEC (PAQEC), arguably the most general framework under which QEC is possible.

\subsection{Algebraic conditions for PAQEC codes}\label{sec:analytical}

We start with the probabilistic QEC conditions in Eq. \eqref{eq:QEC_prob_cond}. Adding perturbation terms $\beta^{ab}_{ij}(m)$ in the PQEC conditions gives us algebraic conditions for PAQEC as shown in the following theorem.

\begin{Theorem}\label{lemma:paqec}
Given a noise channel $\mathcal{E}$ characterized by Kraus operators $\{G_{n}^{a}\}$ that are grouped into $\mu$ sets $\mathcal{G}^{a}$ ($a\in[1,\mu]$), with each set containing $\eta_{a}$ errors $\{G_{n}^{a}, \, n=1,2,\ldots, \eta_{a}\}$. Consider a quantum code with codewords $\{\ket{i_{L}}, i=1,\ldots, d\}$ that satisfy,

\begin{eqnarray}\label{eq:PAC_QEC_cond2}
    && \sum_{n=1}^{\eta_a} \bra{i_L} G^{(a)\dagger}_n G^{(b)}_m \ket{j_L} = \chi^a_i \delta_{ab} \delta_{ij} + \beta^{ab}_{ij}(m), \nonumber \\
    && \quad \quad \forall m \in [1, \eta_{a}], \; \;  i,j \in [1,d],  \; \;  a,b \in [1, \mu].
  \end{eqnarray}
  for scalars $\{\chi_{i}^{a} \neq 0\}$ and $\{\beta_{ij}^{ab}\}$. Then there exists a recovery channel $\cR$ that achieves an entanglement fidelity {$F_{\rm ent} \geq 1 - \cO(\gamma^{t+1})$} for the given set of errors and can be implemented with a finite probability of success, if the following condition is satisfied.
\begin{align}
 \sum\limits_{a,b,m} \left(\sum\limits_{ij} \left|\frac{\beta^{ab}_{ij}(m)}{\chi^a_i}\right|^2  - \frac{1}{d}\left|\sum\limits_{i}\frac{\beta^{ab}_{ii}(m)}{\chi^a_i}\right|^{2} \right) \leq \cO(\gamma^{t+1}) .\label{eq:th1}
    \end{align}
\end{Theorem}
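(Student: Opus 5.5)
The plan is to construct the recovery explicitly, in the same form as Eq.~\eqref{eq:reco_PQEC_perfect} and Eq.~\eqref{eq:reco}, and then compute the post-selected entanglement fidelity in Eq.~\eqref{eq:ent_fid_prob_case}. For each error set $\cG^{(a)}$ define
\[
R_a=\lambda_a\sum_{i,n}\frac{1}{\chi^a_i}\ket{i_L}\bra{i_L}G^{(a)\dagger}_n ,
\]
preceded by a projection $P_a$ onto the span of the noisy subspaces reached by $\cG^{(a)}$. The constant $\lambda_a>0$ is chosen so that $R_a^\dagger R_a\le I$. Exactly as in the PQEC discussion, this gives $\sum_a P_aR_a^\dagger R_aP_a\le I$, so $\cR$ is trace non-increasing.

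Because $\lambda_a$ cancels only within each syndrome class, I would first take the $\lambda_a$ large enough to be nonzero but otherwise fixed. I would also check that the success probability, given by Eq.~\eqref{eq:avg_prob}, is bounded below by a positive constant at leading order. That constant comes from the $\chi^a_i\neq0$ terms.

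Next I would evaluate the Kraus products of $\cR\circ\cE$ on the codespace. Using Eq.~\eqref{eq:PAC_QEC_cond2}, for an error $G^{(b)}_m$ from set $b$ (with $P_a$ removing cross terms to leading order) we get
\[
R_aG^{(b)}_m\ket{j_L}=\lambda_a\sum_i\Big(\delta_{ab}\delta_{ij}+\frac{\beta^{ab}_{ij}(m)}{\chi^a_i}\Big)\ket{i_L}.
\]
Thus each effective logical Kraus operator is $K_{abm}=\lambda_a(\delta_{ab}I+M_{abm})$, where $(M_{abm})_{ij}=\beta^{ab}_{ij}(m)/\chi^a_i$.

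For the entanglement fidelity with $\ket\Psi$ maximally entangled we have $F_{\rm ent}=\sum|\mathrm{Tr}K|^2/(d\sum\mathrm{Tr}K^\dagger K)$. The key algebraic identity is
\[
\mathrm{Tr}(K^\dagger K)-\tfrac1d|\mathrm{Tr}K|^2=\|K-\tfrac{\mathrm{Tr}K}{d}I\|_F^2 .
\]
In this expression the identity part of $K_{abm}$ drops out entirely, leaving $\lambda_a^2\big(\sum_{ij}|M_{ij}|^2-\tfrac1d|\mathrm{Tr}M|^2\big)$. Summing over $a,b,m$ and dividing by the normalizing trace gives
\[
1-F_{\rm ent}=\frac{\sum_{abm}\lambda_a^2\big(\|M_{abm}\|_F^2-\frac1d|\mathrm{Tr}M_{abm}|^2\big)}{\sum_{abm}\mathrm{Tr}K^\dagger_{abm}K_{abm}} .
\]
The denominator is $d\,\bar p_{\rm success}$ up to constants, which is bounded below. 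Hence $1-F_{\rm ent}$ is at most a constant times the left side of Eq.~\eqref{eq:th1}, which is $\cO(\gamma^{t+1})$.

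The main obstacle is handling the weights $\lambda_a$, since they differ across $a$ while the condition in Eq.~\eqref{eq:th1} is unweighted. I would bound $\lambda_a^2\le\max_a\lambda_a^2$ in the numerator. In the denominator I would use the dominant $a=b$ identity contribution, of order $\lambda_a^2 d$, to get a lower bound. This requires the $\lambda_a$ to be comparable, or $\gamma$-independent at leading order, which I would try to justify from $\chi^a_i$ being the leading $\gamma$-order terms.

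A secondary issue is that the overlap of noisy subspaces (for instance the codespace overlapping $\cS^{(0)}_i$) may break the orthogonality used to drop $a\neq b$ cross terms. Any such contribution is already encoded in $\beta^{ab}_{ij}$ with $a\neq b$, so it is included in Eq.~\eqref{eq:th1}.
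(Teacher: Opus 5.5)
Your skeleton matches the paper's. You use the same recovery $R_a\propto\sum_{i,n}(\chi^a_i)^{-1}\ket{i_L}\bra{i_L}G^{(a)\dagger}_n$, effective logical Kraus operators proportional to $\delta_{ab}I+M_{abm}$, and $1-F_{\rm ent}$ equal to the left-hand side of Eq.~\eqref{eq:th1} divided by $\sum_{abm}\|\delta_{ab}I+M_{abm}\|_F^2$. Your Frobenius-norm identity is a tidy replacement for the term-by-term expansion in Eq.~\eqref{eq:pf2}.

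The gap is the lower bound on that denominator. We have $\|I+M_{aam}\|_F^2=d+2\,\mathrm{Re}\,\mathrm{Tr}M_{aam}+\|M_{aam}\|_F^2$, and Eq.~\eqref{eq:th1} constrains only the traceless part of each $M_{abm}$. So nothing you assume prevents the linear term from cancelling ``the dominant $a=b$ identity contribution''. This is a real failure, not a technicality. Take $\chi^a_i=C$ for all $a,i$: this rescales every traceless part by $1/C$, so Eq.~\eqref{eq:th1} can be met for \emph{any} code by making $C$ large. Meanwhile the normalized recovery, and hence $F_{\rm ent}$, does not depend on $C$. In that case $I+M_{aam}=\cO(1/C)$, and your $\lambda_a\propto C$ are not $\gamma$-independent.

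The missing idea is the paper's choice of $\chi^a_i$ as the mean of $\sum_n\bra{i_L}G^{(a)\dagger}_nG^{(a)}_m\ket{i_L}$ over $m$; Eq.~\eqref{eq:th1} must be read for that choice. It gives $\sum_m\beta^{aa}_{ii}(m)=0$, so the linear terms vanish identically. The denominator then equals $Nd\,\Delta$ with $\Delta\ge 1$ as in Eq.~\eqref{eq:delta}, and $1-F_{\rm ent}$ is at most $1/(Nd)$ times the left-hand side of Eq.~\eqref{eq:th1}, with no smallness assumption on $\beta$.

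Your ``main obstacle'' with unequal $\lambda_a$ is self-inflicted. Use a single $\lambda$, fixed so that the largest eigenvalue of $\sum_aR_a^\dagger R_a$ is one, as in the paper. Then $\cR$ is trace non-increasing without any projectors, and $\lambda$ cancels exactly in Eq.~\eqref{eq:ent_fid_prob_case}.

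The per-set version is also not sound as stated. The bound $\sum_aP_aR_a^\dagger R_aP_a\le I$ needs mutually orthogonal $P_a$ with $R_aP_a=R_a$. These exist only if the vectors $\sum_nG^{(a)}_n\ket{i_L}$ for different $a$ are orthogonal, i.e.\ $\sum_m\beta^{ab}_{ij}(m)=0$ for $a\neq b$, which the hypotheses do not give. Shrinking the $P_a$ to force orthogonality would invalidate your use of Eq.~\eqref{eq:PAC_QEC_cond2} to compute $R_aP_aG^{(b)}_m\ket{j_L}$. Noting that cross terms sit in $\beta^{ab}_{ij}$ takes care of the fidelity, but not of trace non-increase.
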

Note that we obtain the exact PQEC conditions in Eq.~\eqref{eq:QEC_prob_cond} when $\beta^{ab}_{ij}(m) = 0$ in Eq.~\eqref{eq:PAC_QEC_cond2} for all $i,j,a,b$. Alternatively,  approximate QEC conditions are obtained when $\eta_a = 1,~ \forall a\in [1,\mu]$. When both of these constraints hold simultaneously, the conditions in Eq.~\eqref{eq:PAC_QEC_cond2} reduce to the well-known Knill-Laflamme conditions for perfect QEC.

\begin{proof}
We prove the sufficiency of the conditions in Theorem~\ref{lemma:paqec} by explicitly constructing a set of recovery operators that can achieve the desired fidelity, albeit with post-selection.
To this end, we replace $g^a_i(\gamma,p)$ in Eq. \eqref{eq:reco} with $\frac{\lambda}{\chi^a_i}$, and use the recovery channel $\cR$ with Kraus operators $R_a$ {defined therein}.
Here, $\lambda$ is introduced to make the channel $\cR$ trace non-increasing, and is chosen such that the largest eigenvalue of $\sum_a R_a^{\dagger} R_a$ is one.

We now calculate the entanglement fidelity achieved by a quantum code satisfying the conditions in Theorem~\ref{lemma:paqec} for a given noise channel using the expression in Eq. \eqref{eq:ent_fid_prob_case}. {Defining $f^{ab}_{ij} (m) = \frac{\beta^{ab}_{ij}(m)}{\chi^{a}_{i}}$, we get,}
\begin{align}\label{eq:pf2}
    \begin{split}
        & F_{\text{ent}} \\
        =& \frac{\sum\limits_{a,m,i,j} \frac{|\lambda|^2}{d^2} \left(1 + f^{aa}_{ii} (m) + f^{aa}_{jj} (m)^* + \sum\limits_{b} f^{ab}_{ii} (m) f^{ab}_{jj} (m)^* \right)}{\sum\limits_{a,m,i} \frac{|\lambda|^2}{d} \left(1 + f^{aa}_{ii}(m) + f^{aa}_{ii}(m)^* + \sum\limits_{bj} |f^{ab}_{ij}(m)|^2 \right)} \\
        =& \frac{\sum\limits_{a,m,i,j} \frac{1}{d^2} \left(1 + 2 Re( f^{aa}_{ii} (m)) + \sum\limits_{b} f^{ab}_{ii} (m) f^{ab}_{jj} (m)^* \right)}{\sum\limits_{a,m,i} \frac{1}{d} \left(1 +  2 Re(f^{aa}_{ii} (m)) + \sum\limits_{bx} |f^{ab}_{ij}(m)|^2 \right)} \\
        =& \frac{\sum\limits_a \eta_a + \frac{2}{d} \sum\limits_{a,m,i} Re(f^{aa}_{ii}(m)) + \frac{1}{d^2} \sum\limits_{a,b,m} |\sum\limits_i f^{ab}_{ii} (m)|^2}{\sum\limits_a \eta_a + \frac{2}{d} \sum\limits_{a,m,i} Re(f^{aa}_{ii}(m)) + \frac{1}{d} \sum\limits_{a,b,m,i,j} |f^{ab}_{ij}(m)|^2},
    \end{split}
\end{align}

Now, if we choose $\chi^a_i$ to be the mean value of $\sum_{n=1}^{\eta_a} \bra{i_L} G^{(a)\dagger}_n G^{(a)}_m \ket{i_L}$ over $m$, then $\sum_m \beta^{aa}_{ii}(m) = 0$, implying that the linear term $\sum\limits_m f^{aa}_{ii}(m)$ in both numerator and denominator of Eq. \eqref{eq:pf2} vanish.

{Setting $\sum_{a}\eta_{a} = N$, the total number of errors that satisfy the PAQEC conditions for this code, the fidelity-loss evaluates to},
\begin{align}
    &1-F_{\text{ent}} \nonumber\\
    =& \frac{\sum\limits_{a,b,m} \left[\frac{1}{d}\sum\limits_{i,j} \left|f_{ij}^{ab}(m)\right|^2  - \frac{1}{d^2}|\sum\limits_{i}f_{ii}^{ab}(m)|^{2} \right]}{N  +  \frac{1}{d} \sum\limits_{a,b,m,i,j} |f^{ab}_{ij}(m)|^2} \nonumber\\
    =& \frac{1}{\Delta Nd} \sum\limits_{a,b,m} \left( \sum\limits_{ij} \left|f_{ij}^{ab}(m)\right|^2  - \frac{1}{d}\left|\sum\limits_{i} f_{ii}^{ab}(m)\right|^{2} \right) , \label{eq:fl}
    \end{align}
    where,
    \begin{equation}
    \Delta = \left(1 + \frac{\sum\limits_{a,b,m,i,j}|f_{ij}^{ab}(m)|^{2}}{N d}  \right) \geq 1. \label{eq:delta}
\end{equation}

Using the condition in Eq. \eqref{eq:th1} and Eq.~\eqref{eq:delta} we get the desired bound,
\begin{equation}
1-F_{\text{ent}} \leq \frac{\cO(\gamma^{t+1})}{Nd}. \label{eq:proof_bound}
\end{equation}
\end{proof}

Finally, we derive an expression for the success probability of implementing the recovery channel defined in Eq.~\eqref{eq:reco}. For a code satisfying the condition in Eq.~\eqref{eq:th1} and with $\chi^a_i$ chosen to be the mean $\sum_{n=1}^{\eta_a} \bra{i_L} G^{(a)\dagger}_n G^{(a)}_m \ket{i_L}$ over $m$, the average success probability in Eq.~\eqref{eq:avg_prob} evaluates to,

\begin{align}\label{eq:prob_succ}
    \begin{split}
    \Bar{p}_{success} =& |\lambda|^2  \left(\sum_{a} \eta_a + \sum_{a,b,m,i,j} \left|\frac{\beta^{ab}_{ij}(m)}{\chi^a_i} \right|^2 \right).
    \end{split}
\end{align}
Note that while we could ignore the normalizing term $\lambda$ in the fidelity calculation, we must consider it during the calculation of the probability of success.

\subsection{The $[5,1]$ PI code in the PAQEC setup}

We next show that our proposed $[5,1]$ PI code satisfies {a special case of the PAQEC conditions in Theorem~\ref{lemma:paqec} for single-qubit GAD noise.}
\begin{corollary}\label{lemma:1}
     The $[5,1]$ PI code given in Eq.~\eqref{eq:code_dutta5} satisfies the conditions  in Eq. \eqref{eq:PAC_QEC_cond2} and Eq.~\eqref{eq:th1} for single-qubit generalized amplitude-damping errors with $t=1$, hence achieving an entanglement fidelity $F_{\rm ent} \geq 1 - \mathcal{O}(\gamma^2)$.
\end{corollary}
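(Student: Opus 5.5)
The plan is to verify Theorem~\ref{lemma:paqec} directly for the $[5,1]$ PI code, with the three error groups $\cG^{(0)},\cG^{(\downarrow)},\cG^{(\uparrow)}$ defined in Sec.~\ref{sec:Dutta_5}, and then read off the bound with $t=1$.

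\textbf{Step 1: off-diagonal blocks.} First establish the block structure of Eq.~\eqref{eq:PAC_QEC_cond2}. Every operator in $\cG^{(0)}$ is diagonal in the computational basis. An element of $\cG^{(\downarrow)}$ (resp.\ $\cG^{(\uparrow)}$) lowers (resp.\ raises) the Hamming weight by exactly one. The codewords have weights $1$ and $4$. Hence $G^{(b)}_m\ket{j_L}$ has support on weights $\{1,4\}$, $\{0,3\}$ or $\{2,5\}$, according to $b=0,\downarrow,\uparrow$, with $j$ fixed by which weight of the pair occurs. This gives
\[
\bra{i_L}G^{(a)\dagger}_nG^{(b)}_m\ket{j_L}=0 \quad\text{unless } a=b \text{ and } i=j .
\]
So $\beta^{ab}_{ij}(m)=0$ whenever $a\neq b$ or $i\neq j$. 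In particular, the bracket in Eq.~\eqref{eq:th1} involves only the diagonal terms $\beta^{aa}_{ii}(m)$.

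\textbf{Step 2: diagonal blocks.} Next compute $\sum_n\bra{i_L}G^{(a)\dagger}_nG^{(a)}_m\ket{i_L}$ explicitly. Use permutation invariance: for $\cG^{(0)}$, the error $G_m$ is labelled by the set $S$ of positions carrying $B_0$, and the quantity depends only on $k=|S|$. Summing over $n$ gives, for each qubit, the factor $A_0^\dagger A_0+B_0^\dagger B_0$ or a cross term. For instance, $\sum_n G_n^{\dagger}$ acting as an operator factorises as a product of single-qubit sums $\sqrt{1-p}A_0+\sqrt{p}B_0$, up to the daggers. So the matrix element reduces to an average over the Dicke state of a product of diagonal single-qubit entries; for weight $1$ and weight $4$ this is elementary. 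The same is done for $\cG^{(\downarrow)}$ and $\cG^{(\uparrow)}$, where the single $A_1$ or $B_1$ factor contributes $\sqrt{(1-p)\gamma}$ or $\sqrt{p\gamma}$. Then set $\chi^a_i$ to the mean over $m$, as in the proof of Theorem~\ref{lemma:paqec}. The deviation $\beta^{aa}_{ii}(m)$ equals the $m$-dependent part of these expressions.

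\textbf{Step 3: expand and bound.} Expand $f^{aa}_{ii}(m)=\beta^{aa}_{ii}(m)/\chi^a_i$ in $\gamma$ and $p$. At zeroth order in $\gamma$ every $A_0,B_0$ is proportional to $I$, so the $m$-dependence (the $p^{k/2}(1-p)^{(5-k)/2}$ weights) must be handled carefully. I expect these weights to factor out identically for $i=0$ and $i=1$. If so, they cancel in the combination
\[
\sum_i |f_{ii}|^2-\tfrac12\Big|\sum_i f_{ii}\Big|^2=\tfrac12|f_{00}-f_{11}|^2 ,
\]
which is the bracket in Eq.~\eqref{eq:th1} for $d=2$ once Step 1 is used. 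The task therefore reduces to showing $f^{aa}_{00}(m)-f^{aa}_{11}(m)=\cO(\gamma)$ for every $a$ and $m$. The expected mechanism is that the leading behaviour is common to both codewords and only $\gamma$-dependent distortions differ. Squaring then gives $\cO(\gamma^2)$, which is Eq.~\eqref{eq:th1} with $t=1$, and the theorem yields $F_{\rm ent}\ge 1-\cO(\gamma^2)$. As a consistency check, compare with Eq.~\eqref{eq:ent_fid_avg}.

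\textbf{Main obstacle.} The hard step is Step 2/3 for $\cG^{(0)}$: showing that the codeword-dependence of the distortion is only $\cO(\gamma)$. The cross terms $A_0^\dagger B_0$ and the choice of which positions carry the excitation in weight-1 versus weight-4 states make the bookkeeping asymmetric. I would handle this by writing each single-qubit diagonal entry as $1-\gamma x_q$, with $x_q$ depending on whether qubit $q$ is excited. Expanding the product to first order then makes the difference between $i=0$ and $i=1$ manifestly $\cO(\gamma)$.
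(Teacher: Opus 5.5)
Your proposal is correct and shares the paper's skeleton. First you kill every block with $a\neq b$ or $i\neq j$; your Hamming-weight argument, in which the six weights $1,4,0,3,2,5$ reached by the pairs $(a,i)$ are all distinct, is a cleaner form of the paper's orthogonality remark. Then you bound the diagonal bracket in Eq.~\eqref{eq:th1}. The difference lies in that second step. The paper only \emph{numerically checks} that the bracket is at most $\alpha_a(p)\,\mathcal{O}(\gamma^2)$. You instead reduce it to $\tfrac12|f^{aa}_{00}(m)-f^{aa}_{11}(m)|^2$ and explain why this is small even though each $f^{aa}_{ii}(m)$ is $\mathcal{O}(1)$. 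That turns a check at sample points into an actual proof with controlled $p$-dependence.

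Two things to tighten when you write it out.

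First, your ``main obstacle'' disappears once you use that at $\gamma=0$ one has $A_0=\sqrt{1-p}\,I$ and $B_0=\sqrt{p}\,I$. The summed error is $\sum_nG^{(0)}_n=(A_0+B_0)^{\otimes 5}$; with the paper's normalization, $\sqrt{1-p}A_0+\sqrt{p}B_0$ double-counts the prefactors. This gives $\sum_n\bra{i_L}G^{(0)\dagger}_nG^{(0)}_m\ket{i_L}=s^5w(m)$ for \emph{both} codewords, where $s=\sqrt{1-p}+\sqrt{p}$, $w(m)=(1-p)^{(5-k)/2}p^{k/2}$, and $\chi^0_i\to s^{10}/32\neq0$. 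Smoothness in $\gamma$ then gives $f^{00}_{00}(m)-f^{00}_{11}(m)=\mathcal{O}(\gamma)$ with no first-order bookkeeping.

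Second, for $\mathcal{G}^{(\downarrow)}$ and $\mathcal{G}^{(\uparrow)}$ the weights do \emph{not} factor out identically. Here $\sum_nG_n$ is a sum over the position of the $A_1$ (resp.\ $B_1$), not a tensor product. Cross terms with a different position contribute for $\ket{1_L}$ (resp.\ $\ket{0_L}$). After dividing by $\gamma$ and letting $\gamma\to0$, permutation invariance gives $c^a_i\,w(m)$. In this case $w(m)$ runs over the four unflipped qubits, and the $m$-independent constants satisfy $c^{\downarrow}_1/c^{\downarrow}_0=c^{\uparrow}_0/c^{\uparrow}_1=8/5$. This is harmless: $f^{aa}_{ii}(m)$ is the left-hand side of Eq.~\eqref{eq:PAC_QEC_cond2} divided by $\chi^a_i$, minus one, so $c^a_i$ cancels. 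Step~3 should state this explicitly rather than expect identical factors. Note also that $\chi^{\downarrow}_i\propto(1-p)\gamma$ and $\chi^{\uparrow}_i\propto p\gamma$, so the requirement $\chi^a_i\neq0$ needs $0<p<1$.
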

\begin{proof}
{As already noted in Sec.~\ref{subsec:encoding}, the action of the first order GAD errors belonging to the different groups $\{\cG^{(0)}, \cG^{(\uparrow)}, \cG^{(\downarrow)}\}$ maps the codewords in Eq.~\eqref{eq:code_dutta5}  into orthogonal states. Errors from the same group map codewords $\ket{0_L}$ and $\ket{1_L}$ into a pair of orthogonal states. Hence, there is no overlap between the states $G^{(a)}_v \ket{i_L}$ and $G^{(b)}_m \ket{j_L}$, when $a\neq b$ or $i\neq j$. This implies that a special case of the PAQEC condition in Eq.~\eqref{eq:PAC_QEC_cond2} is satisfied, with,}
\begin{equation}\label{eq:D1}
   \beta^{ab}_{ij}(m) \propto \delta_{ab} \delta_{ij}.
\end{equation}

Now, to prove the Corollary \ref{lemma:1}, we numerically check that
\begin{align}\label{eq:ineq1}
 \left( \sum\limits_{i} \left|\frac{\beta^{aa}_{ii}(m)}{\chi^a_i}\right|^2  - \frac{1}{d}\left|\sum\limits_{i}\frac{\beta^{aa}_{ii}(m)}{\chi^a_i}\right|^{2} \right) \leq \alpha_{a}(p)\cO(\gamma^{2})
\end{align}
for all $a \in \{0, \uparrow, \downarrow\}, m \in [1, \eta_a]$. Hence, we can conclude that the inequality in Eq. \eqref{eq:th1} holds true with $t=1$, as desired.

\end{proof}

\subsection{Choice of Recovery Channel for PAQEC}\label{sec:diff_reco}

There is no fixed way to choose the value of $\chi^a_i$ for PAQEC codes, as there are many ways to separate the $m$-independent $\chi^a_i$ and $m$-dependent $\beta^{aa}_{ii}(m)$ in Eq. \eqref{eq:PAC_QEC_cond2}.
We can therefore obtain different recovery channels corresponding to different choices of ${\chi^a_i}$. We will describe three such canonical choices below and compare the corresponding entanglement fidelities obtained for the $[5,1]$ PI code with GAD noise.

\textit{\bf $\bullet$ Average value --} The default choice we have made in Sec.~\ref{subsec:recovery} is to let $\chi^a_i$ be the average value of $\sum_{v} \bra{i_L} G^{(a)\dagger}_v G^{(a)}_m\ket{i_L}$. {The corresponding recovery channel for GAD noise with the $[5,1]$ PI code successfully corrects all the first-order errors, since there is no linear $\gamma$ term present in the entanglement fidelity evaluated in Eq. \eqref{eq:ent_fid_avg}.}
However, there are other choices of $\{\chi^a_i\}$ that can give better entanglement fidelity.

\begin{figure*}[t!]
    \includegraphics[width=0.95\linewidth]{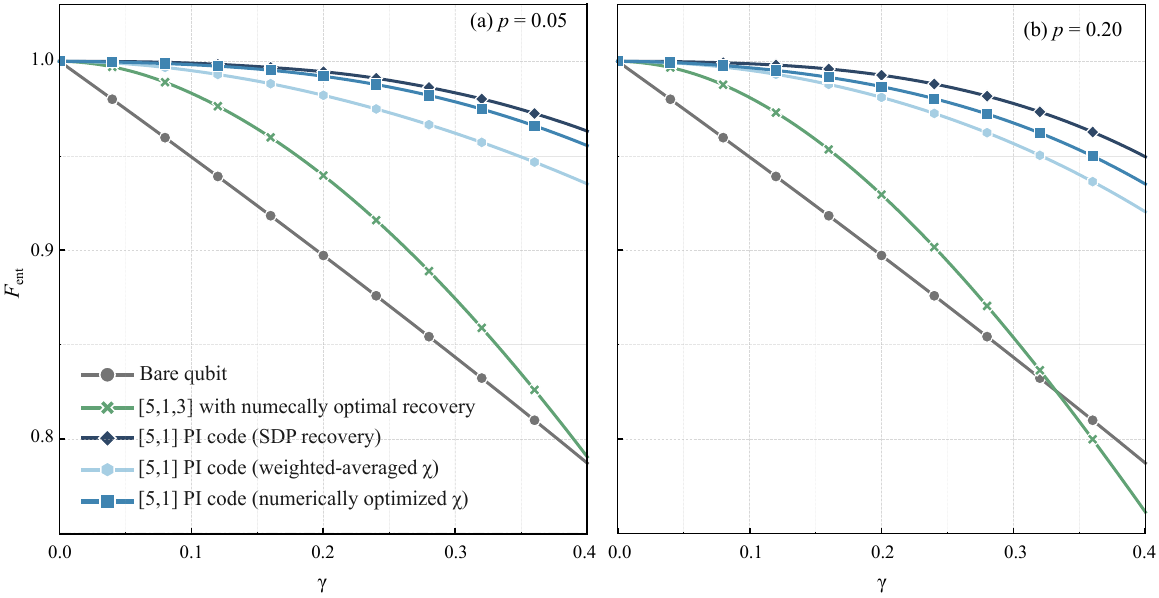}
    \caption{
    \textbf{Performance of different noise-adapted recoveries, for the $[5,1]$ PI code. }
    Entanglement fidelity versus the damping strength $\gamma$ for (a) $p = 0.05$ and (b) $p = 0.2$, for the $[5,1]$ permutation-invariant code with different noise-adapted recoveries,  the $[5,1,3]$ code (numerically optimized recovery), and the bare qubit.
    }
    \label{fig:ent_fid_comp}
\end{figure*}

\textit{\bf $\bullet$ Largest value --} One can choose $\chi^a_i$ to be the largest possible value of $\sum_{v}\bra{i_L} G^{(a)\dagger}_v G^{(a)}_m\ket{i_L}$, by maximizing over all $m$.
\[\chi^a_i = \max\limits_{m}{\sum_v \bra{i_L} G^{(a)\dagger}_v G^{(a)}_m\ket{i_L}}.\]
{In this case, the resulting recovery channel is primarily adapted to correct for} the error occurring with the highest probability. {For the $[5,1]$ PI code with GAD noise,} this yields an entanglement fidelity
\begin{align}\label{eq:highest_ent_fid}
    F_{\text{ent}} = 1 - \left(\frac{9}{20} p + \frac{9}{5} p^2 \right) \gamma^2 + \cO(\gamma^{3})
\end{align}
Choosing the largest value over the average value of $\sum_v \bra{i_L} G^{(a)\dagger}_v G^{(a)}_m\ket{i_L}$ thus gives a higher entanglement fidelity as seen by comparing Eqs. \eqref{eq:ent_fid_avg} and~\eqref{eq:highest_ent_fid}.

\textit{\bf $\bullet$ Weighted average value --}
We can improve the entanglement fidelity further by choosing the value of $\chi^a_i$ to be a \emph{weighted average} of $\sum_{v}\bra{i_L} G^{(a)\dagger}_v G^{(a)}_m\ket{i_L}$ over $m$. To compute the weights for the GAD channel with noise parameters $p$ and $\gamma$, we first count the number of single-qubit damping errors and the number of single-qubit excitation errors present in the tensor product decomposition of each error operator of the channel. If $x_m$ is the number of damping errors and $(5-x_m)$ is the number of excitation errors, then the probability or weight of that error is proportional to $\left((1-p)^{x_m} p^{5-x_m}\right)$. For example, {the weight associated with the error $A_0\otimes R_1 \otimes A_0 \otimes R_0 \otimes A_0$ is $(1-p)^3p^2$}. The weighted average choice leads to,
\begin{align}\label{eqn:wt_avg_expression}
    \chi_i^{a} = \frac{\sum_{m} (1-p)^{x_m} p^{5-x_m} \sum_{v}\bra{i_L} G^{(a)\dagger}_v G^{(a)}_m\ket{i_L}}{\sum_{m} (1-p)^{x_m} p^{5-x_m}}
\end{align}
{The corresponding entanglement fidelity for the $[5,1]$ PI code is,}
\begin{align}
    {F}_{\text{ent}} =& 1 - \left( \frac{9}{20} p + \frac{9}{5} p^2 - \frac{9}{2} p^{5/2} + \cO(p^3) \right) \gamma^2 + \cO(\gamma^{3}).
\end{align}
{This fidelity is marginally higher than that achieved by the previous recoveries, implying} that the best analytical choice for $\chi^a_i$ is indeed the weighted-average approach. Nevertheless, this choice does not guarantee the optimal choice of ${\chi^a_i}$ to maximize entanglement fidelity.

\textit{\bf $\bullet$ Numerically optimized --} Finally, to obtain the optimal choice for $\{\chi^a_i\}$, we can numerically optimize the fidelity using the form of the recovery operators given in Eq. \eqref{eq:reco_PQEC_perfect}.
Algorithm~\ref{alg:find-loss} below outlines the procedure used to compute {fidelity loss, which is then minimized using Powell's Method~\cite{Virtanen2020}}. The fidelity obtained with the optimized set $\{\chi^a_i\}$ exceeds that of previous analytical choices, as shown in Fig.~\ref{fig:ent_fid_comp}.
\begin{algorithm}[H]
\caption{Calculating Loss function to optimize $\{\chi^a_i\}$}
\label{alg:find-loss}
\begin{algorithmic}[1]
\REQUIRE $\{\chi'^a_i\}, \{ G^{(a)}_m \}$
\STATE ${R}_a \gets \sum_{im} \frac{1}{{\chi'}_i^a} \dyad{i_L} G^{(a)\dagger}_m$
\STATE $\text{Loss} \gets 1 - F_{\text{ent}}\left( \{\ket{i_L}\}, \{ G^{(a)}_m \}, \{ R_a \} \right)$
\RETURN Loss
\end{algorithmic}
\end{algorithm}

\section{PAQEC as an optimization problem}\label{sec:Optimization}
Having established PAQEC as a viable framework for QEC, we now explore the possibility of going beyond analytical constructions and seek better performance through numerical optimization. 

\subsection{Optimizing PAQEC recovery using semi-definite programming}
The semi-definite programming (SDP) framework~\cite{Berta2021, Gerard_2024} provides a numerically efficient way to solve convex optimization problems for which the objective function is linear, with the input constrained to a semidefinite cone. In past work, the problem of finding good \emph{approximate} QEC (AQEC) codes has been recast as an SDP, and numerically optimized recovery channels have been constructed within the framework of noise-adapted AQEC codes~\cite{fletcher_thesis, Fletcher_sdp}.

Here, we first show that the problem of PAQEC can be formulated as a convex optimization task. We then use SDP to find the optimal recovery channel for PAQEC, without imposing any specific structural constraints on the recovery operation. Note that the different noise-adapted recovery maps considered in Sec.~\ref{sec:diff_reco} have a rather specific structure, as given in Eq.~\eqref{eq:reco_PQEC_perfect}.

Consider a map $\cE$ composed of a unitary encoding circuit and a noise channel such that $\cE: \cL(\cH_S) \to \cL(\cH_C)$, where $\cH_S$ and $\cH_C$ are the Hilbert spaces of the unencoded and encoded qubits, respectively.
Similarly, consider a recovery map $\cR$ that captures the joint action of the recovery channel and the decoding unitary, $\cR: \cL(\cH_C) \to \cL(\cH_S)$. In our case, the recovery channel is completely positive but \emph{trace non-increasing}.
Here, $\cL(\cH)$ denotes the set of linear operators acting on the Hilbert space $\cH$.

We begin by rewriting the expression for the entanglement fidelity in Eq.~\eqref{eq:ent_fid_prob_case} for probabilistic QEC as,
\begin{align}
    F_{\text{ent}} &= \frac{1}{\Bar{p}_{\text{success}}} \llangle \rho | X_{\cR \circ \cE} | \rho \rrangle, \label{eq:ent_fid_SDP1}
\end{align}
where, $\rho$ is the maximally mixed state and $\Bar{p}_{\text{success}}$ is the average success probability defined in Eq.~\eqref{eq:avg_prob}. 
$|A\rrangle \equiv \operatorname{vec}(A)$ denotes the vectorization of an operator $A$, with  $\llangle A| = \operatorname{vec}(A)^\dagger$~\cite{fletcher_thesis}. $X_{\cR \circ \cE}$ is the Choi matrix obtained by the action of the combined channel $\cR \circ \cE$ on the maximally entangled state with the following vectorized form,
\begin{align}
    X_{\cR \circ \cE} &= \sum_{ij} |R_i E_j\rrangle \llangle R_i E_j| \nonumber\\
    &= \sum_{ij} (E^{T}_j \otimes I) |R_i\rrangle \llangle R_i | (E^{T}_j \otimes I)^{\dagger} \nonumber\\
    &= \sum_{j} (E^{T}_j \otimes I) X_{\cR} (E^{T}_j \otimes I)^{\dagger}.
\end{align}
$X_{\cR}$ is the Choi matrix of the recovery channel. Finally, we express the average probability of successfully implementing the recovery as,
\begin{align}
    \Bar{p}_{\text{success}} &= \sum_{ij}\Tr[(I \otimes R_i E_j) \dyad{\Psi} (I \otimes R_i E_j)^{\dagger}] \nonumber\\
    &= 2 \sum_{ij} \Tr[(I \otimes R_i E_j) |\rho\rrangle \llangle \rho| (I \otimes R_i E_j)^{\dagger}] \nonumber\\
    &= 2 \sum_{ij} \Tr[|R_i E_j\rho\rrangle \llangle R_i E_j\rho|] \nonumber\\
    &= 2 \Tr[X_{\cR} B_{\cE,\rho}] \label{eq:succ_prob_sdp}
\end{align}
where, $\ket{\Psi}$ is the purification of the maximally mixed state which is equivalent to $\sqrt{2}|\rho\rrangle$ in the vectorized notation, and $B_{\cE,\rho} := \sum_{j} (\rho^T E_j^T \otimes I)^{\dagger}(\rho^T E_j^T \otimes I)$.

Combining Eqs.~\eqref{eq:ent_fid_SDP1} and~\eqref{eq:succ_prob_sdp}, we can express the entanglement fidelity for a PAQEC protocol as,
\begin{align}\label{eq:ent_fid_SDP}
    F_{\text{ent}} = \frac{\Tr[X_{\cR} C_{\cE, \rho}]}{2\Tr[X_{\cR}B_{\cE, \rho}]},
    \end{align}
where,
\begin{equation}
    C_{\cE, \rho} = \sum_{j} (E^{T}_j \otimes I)^{\dagger} |\rho \rrangle \llangle \rho | (E^{T}_j \otimes I).
\end{equation}
As our recovery channel is completely positive (CP) and trace non-increasing, we get the following positive semi-definite constraint equations.
\begin{align}
    &X_{\cR} \succeq 0, ~~ \text{(CP condition)} \\
    & \Tr_{\cH_S}\{X_{\cR}\} \preceq I_{\cH_C}~~ \text{(Trace Non-Increasing)}
\end{align}
In this form, the objective is a ratio of two quantities linear in $X_{\cR}$ and hence not itself linear, so Eq.~\eqref{eq:ent_fid_SDP} is a linear-fractional program rather than an SDP. We will use the Charnes--Cooper transformation~\cite{Charnes1962} to rewrite it into a linear form. We define,
\begin{align}
    Y = \alpha X_{\cR},~\text{where } \alpha = \frac{1}{\Tr[X_{\cR}B_{\cE, \rho}]}\label{eq:alpha_def}
\end{align}
Then, the expression for entanglement fidelity becomes,
\begin{equation}
    F_{\text{ent}} = \frac{1}{2}\Tr[Y C_{\cE, \rho}]. \label{eq:ent_fid_SDP_final}
\end{equation}
To obtain the optimal value of $F_{\text{ent}}$, we need to maximise the objective function in Eq.~\eqref{eq:ent_fid_SDP_final} over $Y$, subject to the following constraint equations:
\begin{align}
    & Y \succeq 0, \\
    & \Tr[YB_{\cE, \rho}] = 1, \\
    & \alpha \geq 0, \\
    & \Tr_{\cH_S}\{Y\} \preceq \alpha I_{\cH_C}.
\end{align}
Once we have determined the optimal fidelity, we will then proceed to find $X_{\cR}$ such that $\Tr[X_{\cR}B_{\cE, \rho}]$ is maximum, to maximize the probability of successfully implementing the optimal recovery. 

Algorithm~\ref{alg:optimal-probabilistic} presents the full SDP-based algorithm for finding the optimal recovery for a given probabilistic QEC code and noise model.
\begin{algorithm}[H]
\caption{Optimal probabilistic recovery via Charnes--Cooper transformation}
\label{alg:optimal-probabilistic}
\begin{algorithmic}[1]
\REQUIRE $C_{\cE,\rho},\, B_{\cE,\rho}$ [Noise,Code]
\ENSURE Choi matrix $X_{\cR}$ of the optimal recovery and its success probability $\Bar{p}_{\text{success}}$

\STATE Define optimization variables:
\[
Y \succeq 0,\qquad \alpha \ge 0.
\]

\STATE Solve the semidefinite program
\[
\begin{aligned}
\max_{Y,\alpha}\quad &
\frac{1}{2}\Tr[Y C_{\cE,\rho}] \\
\text{subject to}\quad
& \Tr[YB_{\cE,\rho}]=1,\\
& \Tr_{\cH_S}\{Y\} \preceq \alpha I_{\cH_C},\\
& \alpha \geq 0,\\
& Y\succeq 0.
\end{aligned}
\]

\STATE Recover the Choi matrix
\[
\alpha \gets \lambda_{\max}\big(\Tr_{\cH_S}\{Y\}\big), \qquad
X \gets \frac{Y}{\alpha}.
\]

\STATE Compute the optimal fidelity
\[
F \gets
\frac{\Tr[X C_{\cE, \rho}]}{2\Tr[X B_{\cE, \rho}]}.
\]

\STATE $X_{\cR} \gets \arg\max_{X \succeq 0,\; \text{Tr-NI}} \Re\big[\Tr(X\, B_{\cE,\rho})\big]$ s.t. \\
$\Re\big[\Tr(X\, C_{\cE,\rho} - 2FX\, B_{\cE,\rho})\big] = 0$

\RETURN $X_{\cR}$$,\ \Bar{p}_{\text{success}} = 2\Tr[X_{\cR}B_{\cE,\rho}]$
\end{algorithmic}
\end{algorithm}

The  optimization in Step 2 of Algorithm~\ref{alg:optimal-probabilistic} leaves $\alpha$ undetermined, since it enters only through $\Tr_{\cH_S}\{Y\} \preceq \alpha I_{\cH_C}$. This is the trace non-increasing condition $\sum_i R_i^{\dagger}R_i = \Tr_{\cH_S}\{Y\}/\alpha \preceq I_{\cH_C}$ written for $X_{\cR} = Y/\alpha$. Increasing $\alpha$ reduces $\sum_i R_i^{\dagger}R_i$, which in turn reduces the success probability. Changing $\alpha$ does not affect $F_{\text{ent}}$, since the expression in Eq.~\eqref{eq:ent_fid_SDP} is invariant under $X_{\cR} \mapsto sX_{\cR}$. Step 3, therefore, takes $\alpha = \lambda_{\max}(\Tr_{\cH_S}\{Y\})$, which is the smallest admissible value for which the recovery is successfully implemented as often as the complete positivity and trace non-increasing constraints allow. From Eqs.~\eqref{eq:succ_prob_sdp} and \eqref{eq:alpha_def} we see that the largest success probability compatible with the optimal fidelity is therefore, $\Bar{p}_{\text{success}} = 2/\alpha$.

Fig.~\ref{fig:ent_fid_comp} shows that the SDP-based optimal recovery achieves the highest entanglement fidelity, followed by the recovery with numerically optimized $\chi^{(a)}_i$ and weighted averaged $\chi^{(a)}_i$.
However, a higher entanglement fidelity may not translate into a higher success probability.
Hence, in practice, it is sometimes a better idea to optimize for the probability of success with a slightly reduced entanglement fidelity.

\subsection{Optimal fidelity at fixed success probability}\label{sec:frontier}

The trade-off between the success rate and the entanglement fidelity is not specific to any particular recovery; it simply follows from the structure of the objective function in defined in Eq.~\eqref{eq:ent_fid_SDP1}. We formally characterize this tradeoff in Lemma~\ref{prop:frontier} below. Writing $\Bar{p}$ for $\Bar{p}_{\text{success}}$, we define for fixed $\Bar{p}$, 
\begin{align}\label{eq:Nc}
    N(\Bar{p}) := \max_{X_{\cR}} \Big\{ \Tr[X_{\cR} C_{\cE,\rho}] \;:\; & 2\Tr[X_{\cR} B_{\cE,\rho}] = \Bar{p}, \nonumber \\
    & X_{\cR} \succeq 0, \; \Tr_{\cH_S}\{X_{\cR}\} \preceq I_{\cH_C} \Big\}.
\end{align}
Then, $F^{*}(\Bar{p}) = N(\Bar{p})/\Bar{p}$ is the largest entanglement fidelity attainable at success probability $\Bar{p}$. Fixing $\Bar{p}$ pins the denominator of Eq.~\eqref{eq:ent_fid_SDP}, so Eq.~\eqref{eq:Nc} is an ordinary semidefinite program. 

\begin{figure*}
    \centering
    \includegraphics[width=0.95\linewidth]{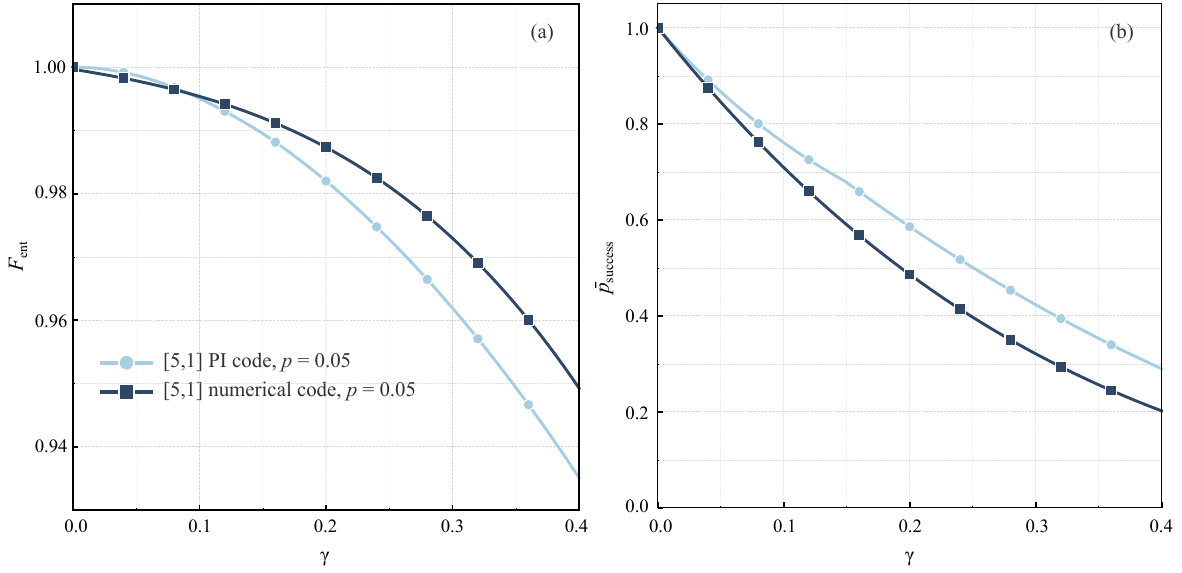}
     \caption{\textbf{Performance comparison of the $[5,1]$ PI code and the numerically optimized code.}
(a) Entanglement fidelity vs.~$\gamma$. (b) Recovery probability vs.~$\gamma$. The numerical code outperforms the PI code at $p=0.05$. This improvement comes at the cost of a lower probability of success (increased overhead).}
    \label{fig:manav_vs_dutta}
\end{figure*}

\begin{lemma}\label{prop:frontier}
~
\begin{itemize}
    \item $F^{*}$ is non-increasing on $(0,1]$.
    \item Let $F$ be the optimal fidelity computed in Step 4 of Algorithm~\ref{alg:optimal-probabilistic}. Then there exists $\Bar{p}_{\max} \in (0,1]$ such that $N$ is linear, $N(\Bar{p}) = F\Bar{p}$, on $(0,\Bar{p}_{\max}]$. 
    \item For every $\Bar{p} > \Bar{p}_{\max}$:
$$F^{*}(\Bar{p}) < F.$$
\end{itemize}

\end{lemma}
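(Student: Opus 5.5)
The plan is to work entirely with the feasible set
\[
\mathcal{K}=\{X\succeq 0:\ \Tr_{\cH_S}\{X\}\preceq I_{\cH_C}\}.
\]
This set is convex and compact, and it contains $0$. On $\mathcal{K}$ the map $X\mapsto (2\Tr[XB_{\cE,\rho}],\,\Tr[XC_{\cE,\rho}])$ is linear, so the image is a compact convex set in the plane. The function $N(\Bar{p})$ defined in Eq.~\eqref{eq:Nc} is the upper boundary of this image, and it is therefore concave on its domain. We also set $N(0)=0$; this is consistent because $B_{\cE,\rho}\succeq 0$ and the trace $\Tr[XB]=0$ forces $\Tr[XC]=0$ on the relevant support, which is to be checked using $C\preceq$ (a multiple of) $B$ on supports.

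\emph{Monotonicity.} Since $N$ is concave with $N(0)=0$, the slope $N(\Bar{p})/\Bar{p}=F^{*}(\Bar{p})$ of the chord from the origin is non-increasing. Explicitly, for $0<q<\Bar{p}$, feasibility is preserved under scaling by $q/\Bar{p}\le 1$ (the constraint set is star-shaped about $0$). This gives $N(q)\ge (q/\Bar{p})N(\Bar{p})$, which is exactly $F^{*}(q)\ge F^{*}(\Bar{p})$. This settles the first item and needs only scaling.

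\emph{Linear segment.} Let $X^{\star}=Y/\alpha$ be the matrix produced in Step 3 of Algorithm~\ref{alg:optimal-probabilistic}, with fidelity $F$ and success probability $\Bar{p}_{0}=2\Tr[X^{\star}B]=2/\alpha>0$. Define
\[
\Bar{p}_{\max}=\sup\{\Bar{p}:\ F^{*}(\Bar{p})=F\}.
\]
This set contains $\Bar{p}_0$, and by compactness of $\mathcal{K}$ the supremum is attained. Because $F$ is the global maximum of the linear-fractional program (Charnes--Cooper equivalence), we have $F^{*}\le F$ everywhere. Scaling the optimizer at $\Bar{p}_{\max}$ by $s\in(0,1]$ gives $N(\Bar{p})\ge F\Bar{p}$ on $(0,\Bar{p}_{\max}]$, and combined with $F^{*}\le F$ this yields $N(\Bar{p})=F\Bar{p}$ there. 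The bound $\Bar{p}_{\max}\le 1$ holds because the success probability of any trace non-increasing map is at most $1$.

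\emph{Strict inequality beyond $\Bar{p}_{\max}$.} For $\Bar{p}>\Bar{p}_{\max}$ we have $F^{*}(\Bar{p})\le F$, and equality would contradict the definition of $\Bar{p}_{\max}$ as a supremum. So the inequality is strict, provided $N$ is attained (compactness) and $\Bar{p}$ lies in the feasible range; outside that range $F^{*}$ is undefined or $-\infty$. The main obstacle I expect is this bookkeeping step: confirming that the supremum is attained, and that the Step 3 rescaling $\alpha=\lambda_{\max}$ really lands in $\mathcal{K}$ with $\Bar{p}_0>0$. This in turn relies on $\Tr[YB]=1$ being compatible with the trace-non-increasing constraint, which I would verify directly from the SDP constraints.
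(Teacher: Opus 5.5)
Your proposal is correct and follows essentially the same route as the paper. The shared steps are: the scaling $X_{\cR}\mapsto sX_{\cR}$ for monotonicity; the bound $N(\Bar{p})\le F\Bar{p}$ from global optimality of $F$; taking $\Bar{p}_{\max}$ as the largest success probability at which $F$ is attained and scaling down from it; and strictness beyond $\Bar{p}_{\max}$ following from that definition. Your compactness argument fills in the existence of $\Bar{p}_{\max}$, which the paper simply assumes, as does your check that $Y/\alpha$ from Step 3 is feasible with $\Bar{p}_0=2/\alpha>0$; the concavity remark is not needed.
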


\begin{proof}
If $X_{\cR}$ is feasible in Eq.~\eqref{eq:Nc} at $\Bar{p}$ and $0 < s \leq 1$, then $sX_{\cR}$ is feasible at $s\Bar{p}$, since scaling preserves positivity and only relaxes $\Tr_{\cH_S}\{X_{\cR}\} \preceq I_{\cH_C}$. Taking $X_{\cR}$ to be the optimal recovery at $\Bar{p}$, so that $\Tr[X_{\cR}C_{\cE,\rho}] = N(\Bar{p})$, the objective of Eq.~\eqref{eq:Nc} evaluated at $sX_{\cR}$ is $\Tr[sX_{\cR}C_{\cE,\rho}] = sN(\Bar{p})$. 
Since $N(s\Bar{p})$ is the maximum over all points feasible there,
\begin{align}
N(s\Bar{p}) &\geq sN(\Bar{p})\\
\implies F^{*}(s\Bar{p}) = \frac{N(s\Bar{p})}{(s\Bar{p})} &\geq \frac{N(\Bar{p})}{\Bar{p}} = F^{*}(\Bar{p}).
\end{align}
Every pair $0 < \Bar{p}_1 < \Bar{p}_2$ is of this form, with $s = \Bar{p}_1/\Bar{p}_2$, so $F^{*}$ is non-increasing.

Since $F$ is the largest attainable fidelity computed by Algorithm~\ref{alg:optimal-probabilistic}, $N(\Bar{p}) \leq F\Bar{p}$ throughout. 

Let $\Bar{p}_{\max}$ be the largest success probability at which $F$ is attained. Scaling the corresponding optimal recovery as above gives $N(\Bar{p}) = F\Bar{p}$ for every $\Bar{p} \leq \Bar{p}_{\max}$. Beyond $\Bar{p}_{\max}$, no recovery attains $F$, so $F^{*}(\Bar{p}) < F$. 
\end{proof}

The threshold for the probability of successfully implementing the PAQEC recovery is
\begin{align}\label{eq:pmax}
    \Bar{p}_{\max} = \max_{Y}\ \frac{2}{\lambda_{\max}(\Tr_{\cH_S}\{Y\})},
\end{align}
with the maximum taken over the optimizers $Y$ of Step 2 in Algorithm~\ref{alg:optimal-probabilistic}. Steps 3 and 5 evaluate Eq.~\eqref{eq:pmax} between them, the former for the optimizer that Step 2 returns and the latter over all of them. Algorithm~\ref{alg:optimal-probabilistic}, therefore, returns a recovery that yields both the maximum attainable entanglement fidelity as well as the highest probability of success, for a given scenario. Taking any probability of success beyond $\Bar{p}_{\max}$ will reduce the $F$ achieved by the recovery.

\subsection{Finding good PAQEC codes}\label{sec:good_PAQECC}
Finally, using the PAQEC conditions, we may also develop a numerical approach to find better codes. We first rewrite the conditions in Eq.~\eqref{eq:PAC_QEC_cond2} as,
\begin{align}\label{eqn:pac_qec_2}
    \sum_{v=1}^{\eta_a} \bra{i} T^\dagger G^{(a)\dagger}_v G^{(b)}_m T\ket{j}
    =
    \chi^a_i \delta_{ab} \delta_{ij} + \beta^{ab}_{ij}(m),
\end{align}
\noindent where $T$ is the encoding isometry used to transform all states from physical to logical space, $T|i\rangle = |i_L\rangle$. Naturally, $T$ will be rectangular of the form $2^n\times 2^k$ for an $[n, k]$ code. We now formulate an optimization problem to find the best encoding matrix $T$ with $2^{n+k}$ real entries, for a given noise channel with a grouping of errors into different sets, as discussed in Sec.~\ref{sec:PAQEC}. We refer to Appendix \ref{sec:App_Optimization} for a detailed discussion of our numerical search algorithm for good PAQEC codes.

Fig. \ref{fig:manav_vs_dutta} compares the entanglement fidelity and the probability of success of the $[5,1]$ PI code and the numerically optimized code for GAD noise.
The numerical code was obtained by sampling error operators with noise parameters $\gamma \in (0,0.4]$ and $p \in (0, 0.2]$.
As seen in Fig. \ref{fig:manav_vs_dutta}, the numerical code outperforms the PI code for $p = 0.05$, indicating that our search algorithm can yield codes superior to the analytically derived PI code in regimes where the relevant noise parameters fall below the sampling range.

\section{Conclusions and future directions}\label{sec:conclusion}

In this work, we have constructed a five-qubit quantum code for generalized amplitude-damping noise, which forms an important source of errors in many quantum computing platforms. Departing from the conventional deterministic recovery paradigm, we construct a probabilistic recovery scheme using which our five-qubit permutation-invariant code achieves a fidelity loss that scales quadratically with noise strength. 

Our probabilistic QEC scheme for GAD noise outperforms the known, deterministic QEC codes, including the $[5,1,3]$ stabilizer code, all of which exhibit {linear fidelity} loss. We present a syndrome-based recovery protocol for implementing our probabilistic QEC scheme and also construct a universal set of transversal logical gates for the $5$-qubit PI code.

Motivated by the structure of our $5$-qubit code, we develop an algebraic framework for probabilistic approximate quantum error correction that relaxes the original Knill-Laflamme conditions in the most general possible way. We further show that the PAQEC conditions can be formulated as an SDP-based optimization problem to find the optimal recovery channel that maximizes the entanglement fidelity. This, in turn, allows us to formalize the trade-off between entanglement fidelity and the probability of successful recovery. 
Finally, we present a systematic optimization-based approach to discovering good quantum codes under the PAQEC framework.

Going forward, it may be interesting to use the PAQEC approach to correct for other realistic noise sources in quantum hardware. 
This direction holds promise for unlocking new classes of efficient, practical, and hardware-tailored quantum error correcting codes. 
Another promising direction of study would be to identify low-depth circuit implementations of the $[5,1]$ PI encoding and recovery, which could in turn pave the way for an experimental realization, even on near-term quantum devices. 
An immediate question for future work is whether the $[5,1]$ PI code can be generalized to a broader family of codes tailored to generalized amplitude damping (GAD) noise.
Finally, the existence of a complete set of transversal logical gates for the $[5,1]$ PI code also suggests a potential pathway to construct fault-tolerant gadgets and estimate the corresponding thresholds for GAD noise.

\section*{Acknowledgements}
We acknowledge the grant support provided by the Mphasis F1 Foundation to CQuICC, a portion of which contributed to the completion of this project. SD, AR, and PM acknowledge funding from the National Quantum Mission, Department of Science and Technology, Govt. of India, via grant no. DST/QTC/NQM/QC/2024/1.
\bibliography{ref}

\appendix

\section{Probability of Success}\label{app:A}

Consider a logical qubit with codewords $\{\ket{0_L}, \ket{1_L}\}$, and an arbitrary logical state
$$
\ket{\psi_L} = \cos\frac{\theta}{2}\,\ket{0_L} + e^{i\phi}\sin\frac{\theta}{2}\,\ket{1_L},
$$
where $(\theta,\phi)$ parameterize points on the Bloch sphere.
Let the probabilistic recovery channel be represented by the Kraus operators $\{R_m\}$, which are applied with post-selection. The success probability for an input logical state $\rho_L$ is then given by
\begin{equation}
\mathbb{P}(\rho_L) = \mathrm{Tr}\!\left[\sum_m R_m \cE(\rho_L) R_m^\dagger\right]
= \mathrm{Tr}\!\left(\rho_L \Phi\right),
\end{equation}

\begin{figure*}[t]
    \centering
    \includegraphics[width=0.9\linewidth]{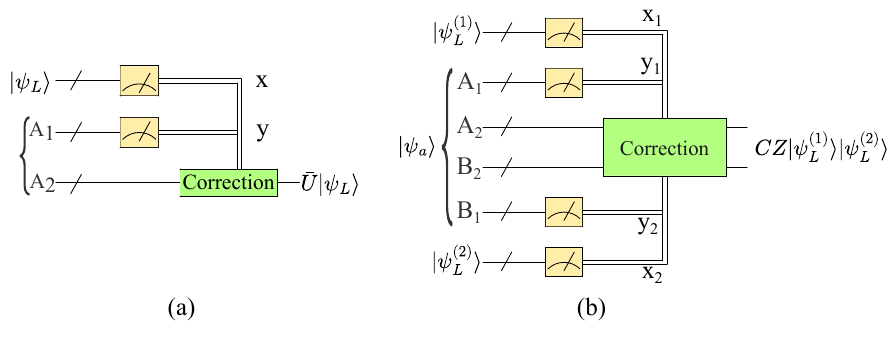}
    \caption{\textbf{Transversal logical gates using gate teleportation.} (a) Logical gate $\bar{U}$, with the ancilla qubits $A_1$ and $A_2$ prepared in $\frac{1}{\sqrt{2}}\left(|0_L\rangle \otimes\Bar{U} |0_L\rangle + |1_L\rangle \otimes \Bar{U}|1_L\rangle \right)$, (b) Two-qubit $\overline{CZ}$ gate for the $[5,1]$ PI code.}
    \label{fig:gate}
\end{figure*}

where $\cE$ is the noise channel with Kraus operators $\{E_j\}$ and
$$\Phi := \sum_{jm} E_j^\dagger R_m^{\dagger} R_m E_j$$
is the POVM element associated with the successful outcome ($0 \le \Phi \le I$).
For a pure logical state $\rho_L = \ket{\psi_L}\!\bra{\psi_L}$, the success probability is
$$
\mathbb{P}(\psi_L) = \bra{\psi_L}\Phi\ket{\psi_L}.
$$
The average success probability, taken uniformly over all pure logical states on the Bloch sphere (i.e., with respect to the Haar measure $d\psi$), is
\begin{equation}
\overline{p}_{\text{success}} = \int \bra{\psi_L}\Phi\ket{\psi_L}\, d\psi_L
= \mathrm{Tr}\!\left(\Phi \int \ket{\psi_L}\!\bra{\psi_L}\, d\psi_L\right).
\end{equation}
Using the well-known identity for the first-order Haar average over an $\Bar{d}$-dimensional Hilbert space,
$$
\int \ket{\psi}\!\bra{\psi}\, d\psi = \frac{I}{\Bar{d}},
$$
we obtain, for the logical qubit ($\Bar{d}=2$),
\begin{equation}
\overline{p}_{\text{success}}
= \mathrm{Tr}\!\left(\Phi\,\rho_L\right)
= \mathbb{P}\!\left(\frac{I}{2}\right).
\end{equation}
\noindent
Hence, the Haar-averaged probability of success equals the success probability evaluated on the maximally mixed logical state.

\section{Transversal Logical Gates for the $[5,1]$ PI Code}\label{appendix:trans_log_gates}

Sec.~\ref{sec:logical_gates} already describes a set of single-qubit logical gates for the $[5,1]$ PI code, namely, the logical $\bar{Z}$, $\bar{S}$ and $\bar{T}$. We now describe circuits for the logical Hadamard and Control-$Z$ gates using gate teleportation~\cite{Knill_2005_gate_teleportation}.

To implement the logical Hadamard gate, we use a resource state of the form $\frac{1}{\sqrt{2}}(\ket{0_L}\otimes\ket{+_L}+\ket{1_L}\otimes\ket{-_L})$.
As shown in Fig. \ref{fig:gate} (a), we measure the logical qubits and the first block of ancilla qubits in the logical Bell basis. Depending on the outcome, we apply a suitable correction (either logical $\overline{X}$ or logical $\overline{Z}$ or both) to apply the logical $\overline{H}$ gate.
To implement the logical $\overline{CZ}$ gate using gate teleportation, we prepare the ancilla qubits in the state \begin{align}
    \ket{\psi_{a}} &=\frac{1}{\sqrt{2}} \ket{0_L}_{A_1}\ket{0_L}_{A_2}\left(\frac{\ket{0_L}_{B_1}\ket{0_L}_{B_2}+\ket{1_L}_{B_1}\ket{1_L}_{B_2}}{\sqrt{2}}\right) \nonumber\\
    &+ \frac{1}{\sqrt{2}} \ket{1_L}_{A_1}\ket{1_L}_{A_2}\left(\frac{\ket{0_L}_{B_1}\ket{0_L}_{B_2}-\ket{1_L}_{B_1}\ket{1_L}_{B_2}}{\sqrt{2}}\right).
\end{align}

Consider a pair of logical qubits in states $|\psi_L^{(1)}\rangle$ and $|\psi_L^{(2)}\rangle$ respectively. To implement a logical $\overline{CZ}$ across them, we first measure $\{|\psi_L\rangle, A_1\}$ and $\{|\psi_L^{(2)\rangle},B_1\}$ in the logical Bell basis $\{\ket{\phi^{xy}} = \frac{1}{\sqrt{2}}(\ket{0_L}\ket{y_L} + (-1)^x \ket{1_L}\ket{\Bar{y}_L})\}$, as shown in Fig \ref{fig:gate} (b).
Based on the measurement outcome $\{x,y\}$, one can apply suitable $\overline{X}$ and $\overline{Z}$ corrections as shown in Table \ref{tab:Correction} to implement $\overline{CZ}$ across the two logical blocks.

\begin{table}[h!]
    \centering
\begin{tabular}{|c|c|c||c|c|c|}
\hline
$\{x_1,y_1\}$ & $\{x_2,y_2\}$ & Correction & $\{x_1,y_1\}$ & $\{x_2,y_2\}$ & Correction\\
\hline
$\{0,0\}$ & $\{0,0\}$ & $I \otimes I$ &$\{0,1\}$ & $\{0,0\}$ & $X \otimes Z$\\
$\{0,0\}$ & $\{0,1\}$ & $Z \otimes X $&$\{0,1\}$ & $\{0,1\}$ & $X \otimes ZX $ \\
$\{0,0\}$ & $\{1,0\}$ & $I \otimes Z $ &$\{0,1\}$ & $\{1,0\}$ & $X \otimes I$\\
$\{0,0\}$ & $\{1,1\}$ & $Z \otimes ZX$&$\{0,1\}$ & $\{1,1\}$ & $ ZX \otimes X $ \\
$\{1,0\}$ & $\{0,0\}$ & $Z \otimes I$&$\{1,1\}$ & $\{0,0\}$ & ${ZX} \otimes {Z}$ \\
$\{1,0\}$ & $\{0,1\}$ & $I \otimes X$ &$\{1,1\}$ & $\{0,1\}$ & $X \otimes ZX$\\
$\{1,0\}$ & $\{1,0\}$ & $Z \otimes Z$ &$\{1,1\}$ & $\{1,0\}$ & $ZX \otimes I$\\
$\{1,0\}$ & $\{1,1\}$ & $I \otimes ZX$&$\{1,1\}$ & $\{1,1\}$ & $X \otimes X$ \\
\hline
\end{tabular}
\caption{Correction needed after the measurement outcome $\{x_1,y_1\}$ for $\ket{\psi_L^{(1)}}$ and $A_1$ qubit of $\ket{\psi_a}$ and $\{x_2,y_2\}$ for $\ket{\psi_L^{(2)}}$ and $B_1$ qubit of $\ket{\psi_a}$.}
\label{tab:Correction}
\end{table}

\section{Numerical Search for PAQEC Codes} \label{sec:App_Optimization}

We describe here the complete numerical procedure used to obtain optimal PAQEC codes, starting with our algebraic conditions for PAQEC, as outlined in Sec.~\ref{sec:good_PAQECC}. We start by listing all the different pieces of the loss function that we will optimize over, to find the desired codewords.
\begin{itemize}\setlength{\itemsep}{0.1cm}
    \item \textbf{$L_{\text{code-ortho}}: $} $\sum_{i\neq j} |\langle i_L | j_L \rangle| \rightarrow 0$, for mutual orthogonality of codewords.
    \item \textbf{$L_{\text{code-norm}}: $} $\sum_i (1 - |\langle i|i \rangle|^2) \rightarrow 0$, to ensure $\text{norm} = 1$ for each code.
    \item \textbf{$L_{\text{-norm}}: $} $\epsilon \cdot \sum_{a, m} \sum_i (1 - ||G^{(a)}_m|i_L\rangle||)^2 \rightarrow 0$, normality of all code words in its noise space with a small weight ($\epsilon$), to prevent a vanishing codeword
    \item \textbf{$L_{\text{PA-ortho}}: $} $\sum_{i, j} \sum_{m, n} ||\langle i|T^\dagger G_m^{(a)\dagger} G_n^{(b)}T|j \rangle||_2\rightarrow 0\ \because\ a\neq b \text{ and } i \neq j \text{ when } a = b \ \forall a,b$, orthogonality between different partitions of error operators via their noise spaces and orthogonality between different code words in a given noise space respectively.
    \item \textbf{$L_{\text{PA-lap}}: $} $\sum_{i} \sum_n ||\sum_m \langle i|T^\dagger G_m^{(a)\dagger} G_n^{(a)}T|i \rangle - \chi_i^a ||_2\rightarrow 0\ \forall\ a$, constant overlap between different codewords for a given partition.
    \item \textbf{$L_{\text{fid}}: $} $(1 - F_\text{ent})^2 \rightarrow 0$, ensure tends to $1$.
\end{itemize}

For the loss function $L_{\text{fid}}$, we calculate the entanglement fidelity $F_\text{ent}$ by choosing $\chi^a_i$ to be the weighted average of $\sum_{v}\bra{i_L} G^{(a)\dagger}_v G^{(a)}_m\ket{i_L}$ over $m$ given in \eqref{eqn:wt_avg_expression}.

We now apply this algorithm to obtain numerically optimized codes for GAD noise. To this end, group the error operators as described in Section \ref{subsec:encoding} and define notation for the grouped error operators, as follows.
 $\mathbb{E}_{\parallel}:= G_m^{(a) \dagger} G_n^{(a)}\ \forall\ a,m,n$, and,
    $\mathbb{E}_{\perp}:= G_m^{(a)\dagger} G_n^{(b)}\ \forall\ m,n \text{ and } a \neq b$.

\begin{algorithm}[H]
\caption{Computing Loss Functions (tol=$10^{-3}$)}
\label{alg:3}
\begin{algorithmic}[1]
    \STATE $L_{\text{PA-lap}} \gets 0$
    \STATE $\mathbb{E}_{\perp}, \mathbb{E}_{\parallel} \gets \text{Sample} (\gamma, p)$
    \newline
    \FOR{$|i_L\rangle \in \mathcal{C}$}
    \FOR{$G_i \in \mathbb{E}_{\parallel}$}
        \STATE $\chi \gets [\ ]$
        \newline
        \FOR{$E_n \in G_i$}
            \STATE $\chi[n] \gets \sum \langle i_L | E | i_L\rangle\ \forall\ E \in E_n$
        \ENDFOR
        \newline
        \STATE $\xi \gets \frac{\sum_n \chi}{\#\chi}$
        \STATE $L_{\text{PA-lap}} \gets L_{\text{PA-lap}} + ||\chi_a - \xi||^2 \ \forall \ \chi_a \in \chi $

    \ENDFOR
    \ENDFOR
    \newline

    \STATE $L_{\text{fid}} \gets - F_\text{ent}(\{\ |i_L\rangle\ \})$
    \STATE $L_{\text{PA-Norm}} \gets \epsilon \cdot \sum_i (1 - ||E \cdot i_L||)^2$
    \STATE $L_{\text{PA-Ortho}} \gets \sum_{i\neq j} ||\langle j_L | \mathbb{E}_{\parallel} | i_L \rangle||^2 + \sum_{i, j} ||\langle j_L | \mathbb{E}_{\perp} | i_L \rangle||^2$

    \STATE $L_{\text{Code-Norm}} \gets \sum_i (1 - ||\ |i_L\rangle\ ||)^2$
    \STATE $L_{\text{Code-Ortho}} \gets \sum_{i\neq j} ||\ \langle j_L | i_L \rangle\ ||^2$
    \newline
    \RETURN All losses
\end{algorithmic}
\end{algorithm}

Since our search is computationally expensive and inefficient, we have applied optimizations that allow us to search the space more efficiently while doing less work.

\subsubsection{Operator Sampling}
Our method is highly dependent on the exact error operators used. While we construct all our error operators as a function of $(\gamma, p)$, in practice our goal is to resist noise \textit{up to} $(\gamma, p)$ and not \textit{at} $(\gamma, p)$. Therefore we will have to factor in
error operators with $(\gamma', p') \in (0, \gamma] \times [0, p]$. Therefore, in each iteration, we select a random $(\gamma', p')$ as above and regenerate the Kraus operators, which are then used in the optimization problem.

While, at face value, this would be a very expensive method, to ensure generalizability, we can make a small modification that leverages efficient caching, enabling us to calculate the operators only once. Instead of randomly selecting a $\gamma' \in (0, \gamma]$, we select $\gamma'$ at discrete values, so in our case, an array of all of $\gamma' = 0.02n\ \because\ n\in [1, 15]$. Since there are only 15 such values and a few more for $p$, over thousands of iterations, we will have far more cache hits than misses.
\subsubsection{Double Descent (optional)}
We also run a second round of optimizations on a simplified version of the previous loss function, with the PAQEC conditions relaxed. The 2nd optimization is also run with a much smaller tolerance since we have almost found our encoding within the main loop. The secondary loop is just to apply small corrections to the code to maximize fidelity, which one may apply optionally.
Here, we keep only the orthonormalization conditions on the codewords and the fidelity maximization condition. All other conditions on or including the error operators are dropped.
The codewords obtained via the optimization method may not be unique; that is, different runs may yield codewords with similar performance. However, we provide one pair of codewords obtained from the aforementioned optimization method.

\begin{align}
    \ket{0_L} &=[
        -0.082,  0.463,  0.442,  0.052,  0.438,  0.046,  0.028, \nonumber\\
        & 0.003, 0.437,  0.036,  0.022,  0.001,  0.016,  0   , 0   ,  0   , 0.439, \nonumber\\
        &   0.022,  0.014, 0   ,  0.009, 0   , 0   ,  0   ,0.007, 0.   , 0   ,  0   , 0   ,  0   ,  0   ,  0  ]^T, \nonumber\\
    \ket{1_L} &= [
        0   , 0   , 0   , 0,  0   ,  0.004, -0.066, -0.105, 0.004, 0.031,\nonumber\\
        & 0.018,  0.04 ,  0.002,  0.015,  0.005, -0.442,  -0.004, 0.011,\nonumber\\
        & -0.001, -0.032,  0.001, 0   , -0.005, -0.443, -0.013, -0.062, \nonumber\\
        &-0.022,  0.434, -0.027,  0.443,  0.445,  0.027]^T.
\end{align}
    \end{document}